\newtheorem{theorem}{Theorem}
\newtheorem{remark}{Remark}
\newtheorem{example}{Example}
\newtheorem{lemma}{Lemma}
\newcommand{\set}[1]{\left\lbrace #1 \right\rbrace}
\newcommand{\dbc}[1]{\left[ #1 \right]}
\newcommand{\brc}[1]{\left(#1 \right)}
\newcommand{\norm}[1]{\left\Vert #1 \right\Vert}
\newcommand{\abs}[1]{\left\vert #1 \right\vert}
\newcommand{\Ex}[1]{\mathcal{E}\left\lbrace #1 \right\rbrace}
\DeclareMathOperator*{\argmin}{argmin}
\newcommand{\trp}{\mathsf{T}}
\newcommand{\her}{\mathsf{H}}
\newcommand{\Diag}{\mathrm{Diag}}
\newcommand{\jj}{\mathrm{j}}
\newcommand{\e}{\mathrm{e}}
\newcommand{\zf}{{\mathrm{zf}}}
\newcommand{\rmg}{{\mathrm{g}}}
\newcommand{\bmm}{\mathbf{m}}
\newcommand{\bh}{\mathbf{h}}
\newcommand{\bg}{\mathbf{g}}
\newcommand{\bo}{\mathbf{o}}
\newcommand{\bn}{\mathbf{n}}
\newcommand{\by}{\mathbf{y}}
\newcommand{\bx}{\mathbf{x}}
\newcommand{\bphi}{\boldsymbol{\phi}}
\newcommand{\mI}{\mathbf{I}}
\newcommand{\mH}{\mathbf{H}}
\newcommand{\mB}{\mathbf{B}}
\newcommand{\mPhi}{\boldsymbol{\Phi}}
\newcommand{\setC}{\mathbb{C}}
\newcommand{\setS}{\mathbb{S}}
\newcommand{\setD}{\mathbb{D}}
\newcommand{\setDb}{\bar{\mathbb{D}}}
\newcommand{\setDzf}{{\mathbb{D}^{{\mathrm{zf}}} }}
\newtheorem{definition}{Definition}
\def\BibTeX{{\rm B\kern-.05em{\sc i\kern-.025em b}\kern-.08em
    T\kern-.1667em\lower.7ex\hbox{E}\kern-.125emX}}
\begin{document}

\title{How to Coordinate Edge Devices for Over-the-Air Federated Learning?}

\author{
	\IEEEauthorblockN{
		Mohammad Ali Sedaghat\IEEEauthorrefmark{1},
		Ali Bereyhi\IEEEauthorrefmark{2},
		Saba Asaad\IEEEauthorrefmark{2}, and
		Ralf R. M\"uller\IEEEauthorrefmark{2},
}
	\IEEEauthorblockA{
		\IEEEauthorrefmark{1}Qualcomm CDMA Technologies GmbH Nuremberg \\
		\IEEEauthorrefmark{2}Friedrich-Alexander Universit\"at Erlangen-Nürnberg (FAU)\\
		\texttt{msedagha@qti.qualcomm.com, \{ali.bereyhi, saba.asaad, ralf.r.mueller\}@fau.de}
}
}
\maketitle

\begin{abstract}
	This work studies the task of device coordination  in wireless networks for over-the-air federated learning (OTA-FL). For conventional metrics of aggregation error, the task is shown to describe the zero-forcing (ZF) and minimum mean squared error (MMSE) schemes and reduces to the NP-hard problem of subset selection. We tackle this problem by studying properties of the optimal scheme. Our analytical results reveal that this scheme is found by searching among the leaves of a tree with favorable monotonic features. Invoking these features, we develop a low-complexity algorithm that approximates the optimal scheme by tracking a dominant path of the tree sequentially. Our numerical investigations show that the proposed algorithm closely tracks the optimal scheme.
	
\end{abstract}

\begin{IEEEkeywords}
Over-the-air communication, federated Learning, multi-antenna, mean squared error.
\end{IEEEkeywords}

\section{Introduction}
 Federated learning (FL) is a privacy-preserving framework for distributed learning that enables edge devices to address a learning task, e.g., training a global model,  jointly without need for sharing their local data \cite{li2020federated,konevcny2016federated,bonawitz2019towards}. This ensures device privacy and provides a joint scheme for learning that performs considerably better than individual distributed learning while keeping the communication load of the system tractable \cite{kairouz2021advances, aledhari2020federated, ding2022federated}.
 
The promising learning performance in FL comes along with several challenges which are roughly divided into the three categories of \textit{privacy}, \textit{statistical properties of local datasets} and \textit{communication-efficiency}. With respect to privacy, FL has been shown to be vulnerable against model inversion attack \cite{wei2020federated,fredrikson2015model}. This follows from the fact that edge devices in FL share their local models that are often strongly correlated with their local data \cite{so2020byzantine, zhu2021federated}. Statistical properties of local datasets further refers to the sensitivity of FL against data distribution. Initial studies on FL have shown promising performance and fast convergence of this framework for homogeneous local datasets. Nevertheless, obtaining these favorable behaviors have been shown to face some analytical challenges when we deviate from simple statistical models for local datasets \cite{sattler2019robust}. This is in particular crucial, as the exact statistical model of data is rather unknown in practice.

This study focuses on the latter class of challenges in FL, i.e., \textit{communication-efficiency}, that refers to communication-related challenges, when the FL framework is to be implemented in a wireless network. In fact, FL schemes require several rounds of communication between a central node, often called the parameter server (PS), and the edge devices. This is in particular challenging when the scheme runs over remote wireless devices introducing various sources of imperfection to the setting, e.g., rate-limited communication links and noisy assessment of local model parameters  \cite{shi2020communication, duan2019astraea, mills2019communication}. 

\subsection{Over-the-Air Federated Learning}
FL performs distributed learning in three main phases: 
\begin{enumerate}
	\item Starting with an initial global model, edge devices learn a model locally from their datasets. They then share the local models with the PS.
	\item The PS collects the local models and \textit{aggregates} them into a global model.
	\item The global model is sent back to the devices, and the devices repeat local learning via the new global model.
\end{enumerate}
The PS and edge devices keep iterating these three phases for several communication rounds until they converge.

Classical approaches for implementation of FL in a wireless network suggest a separate decoding and aggregation at the PS\cite{amiri2020machine,hellstrom2021over}. This means that the PS decodes first the received local models and then combines the decoded models into a global model via a predefined \textit{aggregation policy}. An alternative approach is proposed in \cite{yang2020federated} which invokes the idea of \textit{analog function computation} \cite{nazer2007computation} to perform model aggregation directly over the air. We refer to this latter approach as over-the-air FL (OTA-FL) in this paper. OTA-FL invokes the signal combination in the multiple access channel to realize the predefined aggregation policy directly over-the-air. In other words, the edge devices in OTA-FL transmit their local models with proper scaling, such the desired model aggregation is calculated from the a linearly-transformed version of the received signal at the PS. In various use-cases, OTA-FL is more efficient than separate decoding and aggregation \cite{xu2021learning}. In fact, OTA-FL does not need to orthogonalize communication resources, i.e., allocate individual bandwidth and time slot to every device, and hence it enjoys both advantages of lower computation and  resource efficiency \cite{cao2021optimized, fan20221, cao2022transmission}.

\subsection{Contributions and Organization}
As OTA-FL relies on the uplink channel for model aggregation, \textit{coordination} plays a key role in its performance. In fact, the PS and devices need to agree on a communication protocol, such that the over-the-air aggregated model is as close as possible to the desired form. Prior studies often consider a basic approach for coordination that is known to be sub-optimal from the information theoretic point of view. In this work, we study the concept of coordination in OTA-FL in great details. In this respect, we first formulate an \textit{optimal} coordination scheme for a given metric of aggregation error mathematically. We then show that using the conventional metric of mean squared error (MSE) for aggregation error, the \textit{optimal} coordination scheme reduces to an MSE minimization problem, while the commonly-used strategy for coordination is simply zero-forcing. Our analyses show that both strategies reduce to the NP-hard problem of subset selection and hence deal with the same level of hardness. We tackle the problem of subset selection in both minimum MSE and zero-forcing schemes by studying the properties of the solution. Our investigations reveal that the solution in both cases can be found by searching over the leaves of a tree with some monotonic properties. In the light of these properties, we propose low-complexity algorithms which sequentially approximate the solution by tracking a dominant path towards leaves. Our simulation results show that the proposed tree-based algorithm can track the optimal performance very closely while enjoying a drastically lower computational complexity.

The remaining of this manuscript is organized as follows: We first formulate the problem concretely in Section \ref{sec1}. Minimum MSE and zero-forcing coordination schemes are then illustrated in Sections \ref{sec2} and \ref{sec3} , respectively. Sections \ref{sec:4} and \ref{ammse_sec} investigates the analytical properties of the zero-forcing and minimum MSE schemes, respectively, and propose a tree-based algorithm for their approximation with low complexity. The validity of the derivations are confirmed through numerical investigation in Section \ref{sec5}. Finally, Section \ref{sec6} concludes the paper.

\subsection{Notation}
Scalars, vectors and matrices are represented with non-bold, bold lower-case and bold upper-case letters, respectively. The transpose and transpose conjugate of $\mH$ is denoted by $\mH^{\trp}$ and $\mH^\her$, respectively. The Euclidean norm of $\bx$ is shown by $\norm{\bx}$. The notation $\Ex{\cdot}$ denotes mathematical expectation and $\mathcal{CN}\brc{\boldsymbol{\mu},\mathbf{C}}$ refers to a multivariate complex Gaussian distribution with mean vector $\boldsymbol{\mu}$ and covariance matrix $\mathbf{C}$. The complex plane is denoted by $\setC$, and the conjugate of $z\in\setC$ is shown by $z^*$. For brevity, $\set{1,\ldots,N}$ is abbreviated by $\dbc{N}$.

\section{Problem Formulation}\label{sec1}
A wireless network with $L$ single-antenna edge devices and a PS is considered. The devices employ an FL scheme, e.g., federated averaging, to address a common learning task, e.g., training a given model, jointly over their distributed local datasets. The PS is equipped with an $N$-element array-antenna. Throughout the analyses, we assume that $N \geq L$. This assumption is however taken for the sake of brevity. The case of $N< L$ is then discussed separately as a remark; see Remark \ref{remark1}. 

At the beginning of each coherence interval, the devices send predefined pilots in their uplink channels. The PS then estimates the channel state information (CSI) based on the received signals. 
For simplicity, we assume that the pilots are mutually orthogonal and that the estimation error is negligible. Hence, the PS knows the CSI perfectly. Throughout the analyses, we assume that the channels experience frequency-flat fading processes. This models transmission over a channel with a narrow frequency band. The derivations are in principle extendable to wide-band transmission in multi-carrier systems with frequency selective fading processes.

\subsection{OTA-FL Setting}
\label{sec:ota-fl}
Following the FL scheme, each device determines its local model parameters from its local dataset. It then shares this local model with the PS through its uplink channel. Let $s_\ell$ be the model parameter of device $\ell$ that is to be shared with the PS in a given time-frequency slot. Device $\ell$ transmits this parameter by applying a channel-dependent scaling coefficient $b_\ell$, i.e., it transmits
\begin{align}
	x_\ell = b_\ell s_\ell .
\end{align}
The local models $\set{s_\ell }$ are assumed to be independent and identically distributed (i.i.d.) with mean zero and unit variance\footnote{Note that in practice $\set{s_\ell }$ are unbiased versions of the model parameters, e.g., quantized gradient values, which for weakly-correlated local datasets fit into this model \cite{lee2020bayesian}.}. The devices are further restricted to set their average power below a maximum power $P$. This means that over a period of time in which the channel is approximately constant, the transmit signal of device $\ell$ satisfies
\begin{align}    
	\Ex{ \abs{x_\ell}^2 } = \abs{b_\ell}^2 \leq P.
\end{align}

The devices communicate over a fading Gaussian MAC. Hence, the PS receives $\by\in\setC^N$ which is given by
\begin{align}
	\by=\sum_{\ell=1}^L \bh_\ell x_\ell + \bn.
\end{align}
Here, $\bh_\ell \in\setC^N $ denotes the complex uplink channel vector of device $\ell$, and $\bn\in\setC^N$ is additive white Gaussian noise with mean zero and variance $\sigma^2$, i.e.,  $  \bn \sim \mathcal{N} \brc{ \boldsymbol{0},\sigma^2 \mI_N }$.

The ultimate goal of the PS is to combine the local models according to a predefined strategy specified by the FL scheme. More precisely, the PS aims to determine 
\begin{align}
	\theta = \sum_{\ell=1}^{L} \phi_\ell s_\ell \label{eq:FL_Strategy}
\end{align}
for some predefined coefficients $\set{\phi_\ell}$, and then share it with the devices through its downlink channels. To this end, the PS invokes the idea of OTA-FL and combines the local models directly over-the-air:  %
%
%
it  estimates the desired global model, i.e., $\theta$, by combining the received signal elements. In other words, it employs a linear receiver $\bmm \in\setC^N$ and determines an estimate of $\theta$ as
\begin{align}
 \hat{\theta} = \bmm^\trp \by &= \sum_{\ell = 1}^L \bmm^\trp \bh_\ell x_\ell + \bmm^\trp \bn, \\
 &= \sum_{\ell = 1}^L \bmm^\trp \bh_\ell b_\ell s_\ell + \bmm^\trp \bn. \label{eq:hat_theta}
\end{align}

It is readily to shown that by restricting the estimation at the PS to be linear, the OTA-FL scheme is optimal: there exists a receiver $\bmm$ and scaling coefficients $b_\ell$ whose corresponding estimate  of global model, i.e., $\hat{\theta}$ in \eqref{eq:hat_theta}, recovers the estimate achieved by combining the optimal linear estimation of local models according to the predefined strategy in \eqref{eq:FL_Strategy} \cite{yang2020federated}.
 
\subsection{Aggregation Error of the OTA-FL Setting}
As communication is carried out through a noisy network, the estimated global model $\hat{\theta}$ contains some error as compared with the desired global model $\theta$. This error is often called \textit{aggregation error} and can be evaluated in various respect. The classical approach in OTA-FL is to quantify the aggregation error in terms of the mean squared error (MSE).

Using the MSE as the metric, the aggregation error in this setting is given by 
\begin{align}    
	\epsilon \brc{ \bmm , \set{b_\ell} } &= \Ex{\abs{\hat{\theta} - \theta}^2 }, \\
	&= \Ex{ \abs{\sum_{\ell = 1}^L \brc{\bmm^\trp \bh_\ell b_\ell - \phi_\ell} s_\ell + \bmm^\trp \bn }^2 }, \\
	&=  \sum_{\ell = 1}^L \abs{\bmm^\trp \bh_\ell b_\ell - \phi_\ell}^2 + \sigma^2 \norm{\bmm}^2. \label{eq:Aggr_Err1}
\end{align}
The error expression can be compactly represented as
\begin{align}    
	\label{eq:Aggr_error}
	\epsilon \brc{ \bmm , \mB } &= \norm{\mB \mH^\trp \bmm - \bphi }^2  + \sigma^2 \norm{\bmm}^2
\end{align}
where $\mB = \Diag\set{b_\ell}$, $\phi = \dbc{\phi_1,\ldots,\phi_L}^\trp$ and $\mH$ is the uplink channel matrix, i.e., 
\begin{align}
	\mH = \dbc{\bh_1,\ldots,\bh_L}.
\end{align}


\subsection{Coordination Schemes for the OTA-FL Setting}
To establish FL in the network, the PS needs to coordinate the devices periodically: after the uplink training phase at the beginning of a coherence interval, the PS specifies the uplink scaling coefficients of the devices, i.e., $\set{b_\ell}$, and the linear receiver $\bmm$. The devices are then informed about their scaling coefficients by a downlink transmission. Afterwards, throughout the coherence interval, the PS and devices carry out their communication according to the OTA-FL scheme described in Section~\ref{sec:ota-fl} using the coefficients and receiver specified in the training phase. It is worth mentioning that the PS only shares the scaling coefficients with the devices and not the complete CSI.


In general, the optimality of a coordination strategy depends on the choice of metric that determines the quality of the FL scheme. A good approach is to consider the aggregation error as the metric and design the coordination strategy, such that the error is minimized\footnote{This is in particular effective, since the MSE of estimated global model directly affects the learning quality of the FL scheme \cite{yang2020federated}.}. In this case, the optimal coordination strategy is formulated as
\begin{align} \label{MMSE_problem}   
   \brc{\bmm^\star, \mB^\star} = \argmin_{\bmm\in \setC^N , \abs{b_\ell}^2\leq P } \epsilon \brc{ \bmm , \mB } .
   \end{align}
This is in fact a minimum MSE (MMSE) problem. We hence refer to this strategy as MMSE coordination. Due to its non-convexity, MMSE coordination is computationally intractable in a generic network. 

An alternative approach for coordination is to use the idea of zero-forcing (ZF) \cite{yang2020federated,bereyhi2022matching}. In this approach, the PS ignores noise in the channel and coordinates the devices, such that the noise-free version of the estimated global model constructs the desired global model. From \eqref{eq:hat_theta}, this means that the PS finds $\bmm$ and $\set{b_\ell}$, such that
\begin{align}\label{ZF_1}
	\bmm^\trp \bh_\ell b_\ell = \phi_\ell.
\end{align}
In this case, the aggregation error reduces to 
\begin{align} 
 \epsilon \brc{ \bmm , \mB } = \sigma^2 \norm{\bmm}^2. \label{eq:ZF_Aggr_Err}
\end{align}
The PS hence solves the following problem 
\begin{align} \label{ZF_problem}   
	\bmm^{\rm zf} = &\argmin_{\bmm \in\setC^N} \norm{\bmm}^2 \\
	&\text{subject to} \;  \abs{\frac{\phi_\ell}{\bmm^\trp \bh_\ell} }^2\leq P \; \text{ for } \; \ell\in\dbc{L}, \nonumber
\end{align}
to find the receiver. The scaling coefficients $\set{b_\ell}$ are~then~determined from \eqref{ZF_1} by replacing $\bmm$ with $\bmm^{\rm zf}$, i.e., 
\begin{align}\label{ZF_2}
	 b_\ell^{\rm zf} = \frac{\phi_\ell}{ \bh_\ell ^\trp \bmm^{\rm zf} }
\end{align}
In the sequel, we refer to this scheme as ZF coordination.


\section{MMSE Coordination for OTA-FL} \label{sec2}
The design problem for MMSE coordination is non-convex and generally challenging to solve. Nevertheless, for~some~particular cases, the solution can be derived tractably. An instance is given in \cite{liu2020over}, where the solution of a mathematically similar problem is derived for an uplink scenario with a single-antenna PS in  closed form, despite its non-convexity. Motivated by this result, we present in the sequel some preliminary analyses. We show that the problem in the general case with a multi-antenna PS is more complicated and unlike the single-antenna case, the optimal MMSE scheme cannot be derived analytically. The analyses however give some important insights on MMSE coordination that pave the way for the proposed algorithms in next sections. We start the analyses by the following lemma:


\begin{lemma}\label{lemma1}
With MMSE coordination in the network, there is at least one device that transmits with the maximum transmit power $P$.
\end{lemma}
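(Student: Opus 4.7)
My plan is to argue by contradiction using a simple rescaling invariance of the aggregation error. Suppose, toward contradiction, that the MMSE optimizer $(\bmm^\star,\mB^\star)$ satisfies $|b_\ell^\star|^2 < P$ for every $\ell \in [L]$. Define
\begin{align}
\alpha \;=\; \sqrt{ \frac{P}{\max_{\ell}|b_\ell^\star|^2} } \;>\; 1,
\end{align}
so that the rescaled tuple $\tilde{\bmm} = \bmm^\star/\alpha$ and $\tilde b_\ell = \alpha b_\ell^\star$ still lies in the feasible set, since $|\tilde b_\ell|^2 = \alpha^2 |b_\ell^\star|^2 \leq P$ by choice of $\alpha$.

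Next, I would evaluate the aggregation error at the rescaled tuple using the expanded form in \eqref{eq:Aggr_Err1}. The key observation is that each product $\tilde\bmm^\trp \bh_\ell \tilde b_\ell = \bmm^{\star\trp} \bh_\ell b_\ell^\star$ is invariant under this joint rescaling, so the distortion term $\sum_\ell |\bmm^\trp \bh_\ell b_\ell - \phi_\ell|^2$ is unchanged. The noise term, on the other hand, scales as $\sigma^2 \|\tilde\bmm\|^2 = \sigma^2 \|\bmm^\star\|^2/\alpha^2$, which is strictly smaller than $\sigma^2 \|\bmm^\star\|^2$ because $\alpha > 1$. Therefore $\epsilon(\tilde\bmm,\tilde\mB) < \epsilon(\bmm^\star,\mB^\star)$, contradicting the optimality of $(\bmm^\star,\mB^\star)$.

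There is one degenerate case to rule out for completeness: if $\bmm^\star = \boldsymbol{0}$, then the rescaling is trivial but the assumed optimal receiver is useless, and one can exhibit any non-zero feasible $(\bmm,\mB)$ with strictly lower error (provided $\bphi \neq \boldsymbol{0}$, which is the only interesting case). Similarly, if all $b_\ell^\star = 0$, then the objective is $\|\bphi\|^2 + \sigma^2\|\bmm^\star\|^2$, which is strictly improved by e.g. a ZF-type choice as in \eqref{ZF_2}, so this case cannot be optimal either.

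I do not anticipate a serious obstacle here; the argument is essentially the standard "push the power up until the constraint binds" trick, and the main thing to be careful about is to verify that the rescaling keeps the signal-plus-bias term intact so that only the noise term changes. The existence of a binding device then follows immediately from the definition of $\alpha$, since the $\ell$ that achieves $\max_\ell |b_\ell^\star|^2$ will satisfy $|\tilde b_\ell|^2 = P$ at any would-be improved point; the contradiction shows this must already be the case at the optimum itself.
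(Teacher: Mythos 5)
Your proof is correct and follows essentially the same route as the paper: a contradiction via the joint rescaling $\tilde{\bmm}=\bmm^\star/\alpha$, $\tilde b_\ell=\alpha b_\ell^\star$ that leaves the distortion term invariant and strictly shrinks the noise term. Your extra care with the degenerate cases ($\bmm^\star=\boldsymbol{0}$ or all $b_\ell^\star=0$) is a small refinement the paper omits, and it is what turns the paper's non-strict inequality into the strict one the contradiction actually needs.
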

\begin{proof}
The proof is given by contradiction: assume that all the devices transmit with power less than $P$ while coordinated via the MMSE scheme in \eqref{MMSE_problem}. Let $\bmm^\star$ and $\set{b_\ell^\star}$ refer to the solutions of the MMSE scheme. This means that $\abs{b_\ell^\star }^2 < P$. We now construct new scaling coefficients as $\hat{b}_\ell  = \kappa b_\ell^\star$, where $\kappa > 1$ is chosen such that 
\begin{align}
	\max_{\ell \in\dbc{L}}    \abs{\hat{b}_\ell}^2 = P.
\end{align}
A new receiver is further constructed from $\bmm^\star$ as $\hat{\bmm} = \kappa^{-1} \bmm^\star$ and satisfies $\norm{\hat{\bmm}} < \norm{\bmm^\star}$.

We now consider the aggregation error derived in \eqref{eq:Aggr_Err1}. Using the new scaling coefficients and receiver, i.e., $\hat{\bmm}$ and $\{ \hat{b}_\ell \}$, the first term in the error expression remains unchanged as compared with the term given for $\bmm^\star$ and $\set{b_\ell^\star}$. The second term however becomes smaller. This means that we have found  $\hat{\bmm}$ and $\{ \hat{b}_\ell \}$ that satisfy
\begin{align}    
	\epsilon ( \hat{\bmm} , \{ \hat{b}_\ell \} ) \leq \epsilon \brc{ \bmm^\star , \{ b_\ell^\star \} }.
\end{align}
This contradicts with the definition of MMSE coordination in \eqref{MMSE_problem}, and hence concludes the proof\footnote{A similar result has been shown in \cite{liu2020over} for a single-antenna PS.}.
%
\end{proof}
Lemma \ref{lemma1} implies that with MMSE coordination, at least one device is transmitting with the maximum power. From this result, we can have a simple conclusion: let the set of devices $\dbc{L}$ be partitioned as $\dbc{L} = \setD \cup \setDb$, where $\setD$ denotes the set of all devices transmitting with the maximum power $P$ and $\setDb$ represents the complement of $\setD$ including all devices whose transmit powers are less than $P$. Lemma~\ref{lemma1} indicates that $\setD$ is always non-empty, i.e., it contains at least one device. In general, there can be more than a single device in $\setD$. 

We next state Lemma~\ref{lemma2} which describes a key property of the devices in $\setDb$:
\begin{lemma}
	\label{lemma2}
	Let $\bmm^\star$ and $\set{b_\ell^\star}$ be given by MMSE coordination. For any $\ell \in \setDb$, i.e., any device transmitting with power less than $P$, $b_\ell^\star$ satisfies
	\begin{align*}
			b_\ell^\star = \frac{\phi_\ell}{ \bh_\ell ^\trp \bmm^\star }. \label{eq:lem_2}
	\end{align*}
\end{lemma}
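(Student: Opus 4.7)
The plan is to exploit the fact that for any $\ell\in\setDb$ the power constraint $|b_\ell|^2\leq P$ is strictly inactive at the optimum, so $b_\ell^\star$ must be an \emph{unconstrained} stationary point of the cost \eqref{eq:Aggr_Err1} in the variable $b_\ell$, with $\bmm$ fixed to $\bmm^\star$ and the remaining $\{b_j^\star\}_{j\neq\ell}$ fixed too. Because the individual terms in the sum \eqref{eq:Aggr_Err1} decouple across $\ell$, the cost as a function of $b_\ell$ alone is simply $|\bmm^{\star\trp}\bh_\ell b_\ell-\phi_\ell|^2 + C$, where $C$ collects everything independent of $b_\ell$. Minimizing this one-term quadratic yields the claimed closed form.

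Concretely, I would first observe that if $(\bmm^\star,\{b_\ell^\star\})$ solves \eqref{MMSE_problem} and $\ell\in\setDb$, then for all sufficiently small $\delta\in\setC$ the perturbed vector obtained by replacing $b_\ell^\star$ with $b_\ell^\star+\delta$ still satisfies $|b_\ell^\star+\delta|^2\leq P$ and thus is feasible, while all other constraints remain untouched. Optimality then forces
\begin{align*}
\left.\frac{\partial}{\partial b_\ell^*}\,\epsilon(\bmm^\star,\mB)\right|_{b_\ell=b_\ell^\star}=0,
\end{align*}
which, applied to $|\bmm^{\star\trp}\bh_\ell b_\ell-\phi_\ell|^2$ via the Wirtinger rule, gives $(\bmm^{\star\trp}\bh_\ell)^*\,(\bmm^{\star\trp}\bh_\ell\,b_\ell^\star-\phi_\ell)=0$. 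Provided $\bh_\ell^\trp\bmm^\star\neq 0$, solving for $b_\ell^\star$ yields the stated identity.

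The only subtle point is the degenerate case $\bh_\ell^\trp\bmm^\star=0$, in which the first-order condition is automatically satisfied for every $b_\ell$ and the quoted fraction is ill-defined. I would handle this by noting that in such a case the $\ell$-th summand in \eqref{eq:Aggr_Err1} reduces to the constant $|\phi_\ell|^2$, independent of $b_\ell^\star$, so the value of $b_\ell^\star$ is immaterial to the MMSE objective; one can then simply \emph{define} $b_\ell^\star$ via the limiting rule $b_\ell^\star=\phi_\ell/(\bh_\ell^\trp\bmm^\star)$ (interpreted as any feasible scalar) without affecting the optimum. This case can either be ruled out by a brief side argument—if $\phi_\ell\neq 0$ then aligning $\bmm^\star$ to also pick up $\bh_\ell$ would strictly reduce the cost, contradicting optimality—or simply flagged as a measure-zero exception, and I would expect the authors to adopt the former. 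Apart from this edge case, the proof is a one-line stationarity computation, so the main obstacle is not technical but rather writing the argument cleanly in the complex-variable setting.
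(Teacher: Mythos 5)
Your proposal is correct and is essentially the paper's own argument: the paper proves the lemma by the same local-perturbation reasoning (a contradiction built on the strictly inactive constraint $\abs{b_\ell^\star}^2 < P$ and the fact that the $\ell$-th summand of \eqref{eq:Aggr_Err1} decouples and is minimized at the ZF value), which is just the contrapositive phrasing of your stationarity condition. Your extra care with the degenerate case $\bh_\ell^\trp\bmm^\star = 0$ goes beyond what the paper does (it silently assumes the denominator is nonzero), and is a welcome, if minor, addition.
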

\begin{proof}
    The proof is given by contradiction following similar steps taken in the proof of \ref{lemma1}. We hence skip the details.
\end{proof}

To understand this result, let us consider the aggregation error in \eqref{eq:Aggr_Err1}. Lemma~\ref{lemma1} indicates that with MMSE coordination, the components in the first term of the aggregation error that correspond to the devices in $\setDb$, i.e., 
\begin{align}
	\abs{\bmm^{\star \trp} \bh_\ell b^\star_\ell - \phi_\ell}^2 = 0.
\end{align}
In other words, for these devices, MMSE coordination is similar to ZF, c.f. \eqref{ZF_2}, and only the devices in $\setD$, i.e., those who transmit with power $P$, contribute to the first term of the aggregation error. 

We next connect the receiver given by MMSE coordination, i.e., $\bmm^\star$, to the devices in $\setD$. 
\begin{lemma}\label{opt_com}
    Let $\bmm^\star$ be given by MMSE coordination. Then, there exists complex scalars $\alpha_\ell$ for $\ell \in \setD$, such that
    \begin{align}
    	\bmm^\star = \sum_{\ell \in\setD } \alpha_\ell \bh_\ell.
    \end{align}
\end{lemma}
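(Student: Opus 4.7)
The plan is to use Lemma~\ref{lemma2} to eliminate the $\setDb$ devices from the aggregation error and then apply a first-order optimality condition to $\bmm^\star$ alone.

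First, the Lemma~\ref{lemma2} identity $b_\ell^\star \bh_\ell^\trp \bmm^\star = \phi_\ell$ holds for every $\ell \in \setDb$. Substituting into the error expression \eqref{eq:Aggr_Err1}, these residuals vanish, so the aggregation error at the MMSE solution reduces to
\begin{align*}
	\epsilon\brc{\bmm, \mB^\star} = \sum_{\ell \in \setD} \abs{b_\ell^\star \bh_\ell^\trp \bmm - \phi_\ell}^2 + \sigma^2 \norm{\bmm}^2 ,
\end{align*}
in which only channels of the max-power set $\setD$ appear inside the squared residuals.

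Next, I would argue that $\bmm^\star$ is a \emph{locally unconstrained} minimizer of this reduced objective. For $\ell \in \setD$ the power constraint is absorbed by freezing $b_\ell = b_\ell^\star$ with $\abs{b_\ell^\star}^2 = P$; for $\ell \in \setDb$ the constraint $\abs{b_\ell^\star}^2 < P$ is \emph{strict}, so by continuity of $\bmm \mapsto \phi_\ell/(\bh_\ell^\trp \bmm)$ at $\bmm^\star$ the Lemma~\ref{lemma2} prescription $b_\ell = \phi_\ell/(\bh_\ell^\trp \bmm)$ remains feasible for all $\bmm$ in a neighborhood of $\bmm^\star$. Hence $\bmm^\star$ satisfies the stationarity condition of the reduced objective with no active dual multipliers, and a direct first-order computation rearranges into an identity of the form
\begin{align*}
	\sigma^2 \bmm^\star \;=\; \sum_{\ell \in \setD} \alpha_\ell \bh_\ell ,
\end{align*}
where each $\alpha_\ell$ is a scalar depending on $b_\ell^\star$, $\phi_\ell$ and $\bh_\ell^\trp \bmm^\star$. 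Dividing by $\sigma^2$ gives the claim.

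The main obstacle is the feasibility-of-perturbation step: one must rigorously confirm that the local unconstrained minimization picture is correct, i.e., that perturbations of $\bmm$ do not violate the power budget of any $\setDb$-device through the Lemma~\ref{lemma2} prescription. The strict inequality $\abs{b_\ell^\star}^2 < P$ combined with continuity settles this, but it needs to be stated explicitly before invoking the unconstrained first-order condition; beyond that the computation is routine.
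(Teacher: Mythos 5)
Your proposal is correct and follows essentially the same route as the paper: Lemma~\ref{lemma2} collapses the $\setDb$ residuals, the coefficients of the $\setD$ devices are frozen at magnitude $\sqrt{P}$, and $\bmm^\star$ is shown to minimize the resulting regularized ZF objective in $\bmm$ alone, which forces it into the span of the $\setD$ channels. Where you differ is in how that last step is justified, and your version is actually the more careful one: the paper asserts the inequality \eqref{eq:upp} ``for any $\bmm\in\setC^N$,'' which implicitly requires that the ZF prescription for the $\setDb$ devices remain within the power budget at the competing $\bmm$; your observation that $\abs{b_\ell^\star}^2<P$ is strict and that $\bmm\mapsto\phi_\ell/(\bh_\ell^\trp\bmm)$ is continuous at $\bmm^\star$ supplies exactly the missing feasibility argument, and since the reduced objective is a strictly convex quadratic, local stationarity already implies the global-minimizer conclusion the paper wants. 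One caveat: carrying out your first-order computation gives $\sigma^2\bmm^\star=-\sum_{\ell\in\setD}\brc{b_\ell^\star\bh_\ell^\trp\bmm^\star-\phi_\ell}\brc{b_\ell^\star}^*\bh_\ell^*$, i.e.\ a combination of the \emph{conjugated} channels $\bh_\ell^*$, which is consistent with \eqref{eq:mmse_final} but not with the literal $\bh_\ell$ in the lemma statement (over $\setC$ these span different subspaces); this is an inconsistency in the statement itself rather than in your argument, but you should flag it rather than write $\bh_\ell$ in your final identity.
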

\begin{proof}
    The proof follows the results of Lemmas~\ref{lemma1} and \ref{lemma2}. From Lemma~\ref{lemma2}, we know that 
    \begin{align}    
    	\epsilon \brc{ \bmm^\star , \set{b_\ell^\star}  } &=  \sum_{\ell \in \setD} \abs{\bmm^{\star\trp} \bh_\ell b^\star_\ell - \phi_\ell}^2 + \sigma^2 \norm{\bmm^\star}^2. 
    \end{align}
    Lemma~\ref{lemma1} further implies that for $\ell\in \setD$, the scaling coefficient is $b^\star_\ell = \sqrt{P} \e^{\jj \varphi_\ell}$ for some phase $\varphi_\ell$. Hence, we can write
    \begin{align}    
    	\epsilon \brc{ \bmm^\star , \set{b_\ell^\star}  } &=  \sum_{\ell \in \setD} \abs{\bmm^{\star\trp} \bh_\ell \sqrt{P} \e^{ \jj \varphi_\ell} - \phi_\ell }^2 + \sigma^2 \norm{\bmm^\star}^2 \nonumber \\
    	&\leq  \sum_{\ell \in \setD} \abs{\bmm^{\trp} \bh_\ell \sqrt{P} \e^{ \jj \varphi_\ell} - \phi_\ell }^2 + \sigma^2 \norm{\bmm}^2 \label{eq:upp}
    \end{align}
    for any $\bmm \in \setC^N$. Note that $\varphi_\ell = -\vartheta_\ell $ with $\vartheta_\ell$ being the phase of $\bmm^{\star\trp} \bh_\ell$.
    
    From \eqref{eq:upp}, we can conclude that $\bmm^\star$ is the solution to\footnote{Note that we can set $\varphi_\ell = 0$ without loss of generality.}
    \begin{align}
    	\min_{\bmm \in \setC^N}  \sum_{\ell \in \setD} \abs{\bmm^{\trp} \bh_\ell \sqrt{P} - \phi_\ell }^2 + \sigma^2 \norm{\bmm}^2 \label{eq:obj_mse}
    \end{align}
	that describes a regularized ZF problem. This implies that $\bmm^\star$ only linearly depends on $\set{\bh_\ell: \ell\in\setD}$ and components outside the space spanned by these channel vectors only increase the objective function in \eqref{eq:obj_mse}. This concludes the proof.
\end{proof}

Lemma~\ref{opt_com} describes a genie-aided solution for the MMSE scheme: assume that set $\setD$ is known to us. MMSE coordination is then readily found as follows: we first set,
\begin{align}
	\bmm^\star &= \argmin_{\bmm \in \setC^N}  \sum_{\ell \in \setD} \abs{\bmm^{\trp} \bh_\ell \sqrt{P} - \phi_\ell }^2 + \sigma^2 \norm{\bmm}^2 \\
	&= \argmin_{\bmm \in \setC^N}  \norm{ \sqrt{P} \mH_\setD^\trp \bmm  - \bphi_\setD }^2 + \sigma^2 \norm{\bmm}^2\\
	&\stackrel{\dagger}{=} \frac{1}{\sqrt{P}} \mH_\setD^* \brc{ \mH_\setD^\trp \mH_\setD^* + \frac{\sigma^2}{P} \mI_D }^{-1} \bphi_\setD, \label{eq:mmse_final}
\end{align}
where $\mH_\setD \in \setC^{N \times D}$ with $D= \abs{\setD}$ is the reduced form of the channel matrix $\mH$ including only the columns of $\mH$ whose indices are in $\setD$. Similarly, the vector $\bphi_\setD$ is the reduced form of $\bphi$. The identity $\dagger$ further  follows the regularized ZF solution. Having $\bmm^\star$, we then set $b_\ell^\star= \sqrt{P}$ for $\ell \in \setD$ and the remaining scaling coefficients according to \eqref{eq:lem_2} in Lemma~\ref{lemma2}.

The results of Lemmas~\ref{lemma1}-\ref{opt_com} imply that the main challenge in MMSE coordination is to find set $\setD$, i.e., the devices that need to transmit with maximum power. This is in general an integer programming problem. In Section~ \ref{ammse_sec}, we propose a low-complexity algorithm to approximate this set. To state this algorithm, we need first to present some analytical results on ZF coordination. In the light of these results, we present our low-complexity algorithm for MMSE coordination.

\section{ZF Coordination for OTA-FL}
\label{sec3}
ZF coordination sets the first term of the aggregation error to zero while keeping the transmit powers below $P$. This leads to a closed-form scaling coefficients, c.f. \eqref{ZF_2}, and an expression for the receiver in terms of a quadratic programming problem, i.e., \eqref{ZF_problem}. The ZF scheme can be seen as a mismatched version of MMSE coordination in which the PS postulates noise to be zero-variance. In this section, we invoke this interpretation and extend the analytic results of Section~\ref{sec2} to ZF coordination. 
%

We start the analysis by deriving a \textit{genie-aided} form of ZF coordination. To this end, we partition the devices into subsets $\setD^\zf$ and $\setDb^\zf$ with the former denoting those that transmit with maximum power $P$ and the latter being its~complement.~Similar to $\setD$, the set $\setDzf$ cannot be empty.
\begin{lemma}\label{lemma4}
	With ZF coordination in the network, there is at least one device that transmits with the maximum power $P$.
\end{lemma}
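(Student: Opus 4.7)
The plan is to argue by contradiction, in direct analogy with the proof of Lemma~\ref{lemma1}. Suppose that $\bmm^{\rm zf}$ solves the ZF problem~\eqref{ZF_problem} but that every device ends up transmitting with power strictly below $P$. Using the closed-form scaling in~\eqref{ZF_2}, this assumption is equivalent to $\abs{\phi_\ell}^2 < P \abs{\bh_\ell^\trp \bmm^{\rm zf}}^2$ for every $\ell\in\dbc{L}$, i.e., the ZF power constraint in~\eqref{ZF_problem} is strictly inactive at every device.

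Next I would exploit the homogeneity of the ZF constraints in $\bmm$ by rescaling the receiver. Define $\hat{\bmm} = \kappa \bmm^{\rm zf}$ for a real scalar $\kappa\in(0,1)$ to be chosen. Evaluating the power constraint of~\eqref{ZF_problem} at $\hat{\bmm}$ gives $\abs{\phi_\ell}^2 \leq P \kappa^2 \abs{\bh_\ell^\trp \bmm^{\rm zf}}^2$. Choosing $\kappa^2 = \max_{\ell\in\dbc{L}} \abs{\phi_\ell}^2/\brc{P \abs{\bh_\ell^\trp \bmm^{\rm zf}}^2}$ enforces feasibility with equality at the maximizing device, and by the strict inequality assumed above this choice satisfies $\kappa^2 < 1$.

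Since $\norm{\hat{\bmm}}^2 = \kappa^2 \norm{\bmm^{\rm zf}}^2 < \norm{\bmm^{\rm zf}}^2$, the candidate $\hat{\bmm}$ is feasible for~\eqref{ZF_problem} and strictly improves the objective, contradicting the optimality of $\bmm^{\rm zf}$. Hence the assumption fails and at least one device must transmit at the maximum power $P$. I do not expect any real obstacle here: the only point that deserves a brief remark is that the associated scaling coefficients adjust automatically as $\hat{b}_\ell = \phi_\ell / \brc{\kappa \bh_\ell^\trp \bmm^{\rm zf}} = b_\ell^{\rm zf}/\kappa$, keeping the ZF identity~\eqref{ZF_1} satisfied throughout the rescaling, so the comparison is a legitimate one within the ZF coordination framework.
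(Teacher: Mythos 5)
Your proposal is correct and follows essentially the same route as the paper: the paper proves Lemma~\ref{lemma4} by the same contradiction-and-rescaling argument used for Lemma~\ref{lemma1}, i.e., if every power constraint were strictly inactive one could shrink the receiver by a factor $\kappa<1$ (equivalently, scale the coefficients up by $1/\kappa$) while preserving the ZF identities and strictly reducing $\norm{\bmm}^2$. Your explicit choice of $\kappa^2$ as the maximum of $\abs{\phi_\ell}^2/\brc{P\abs{\bh_\ell^\trp\bmm^{\rm zf}}^2}$ and the remark that $\hat{b}_\ell = b_\ell^{\rm zf}/\kappa$ keeps \eqref{ZF_1} satisfied merely make precise what the paper leaves implicit.
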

\begin{proof}
	The proof is given by contradiction following identical steps as in the proof of Lemma~\ref{lemma1}. 
\end{proof}

We can further develop a similar result as in Lemma~\ref{opt_com} for ZF coordination:  if one knows $\setD^\zf$; then, the receiver $\bmm^{\rm zf}$ is given by the following lemma.
\begin{lemma}\label{lemma_zf_m}
Given $\setDzf$, the receiver $\bmm^{\rm zf}$ in \eqref{ZF_problem} is given by
\begin{eqnarray}\label{lemmazf_eqn}
\bmm^{\rm zf} =\frac{1}{\sqrt{P}} \mH_\setDzf^* \brc{ \mH_\setDzf^\trp \mH_\setDzf^*}^{-1} \bphi_\setDzf,
\end{eqnarray}  
and the aggregation error is
\begin{eqnarray}\label{mse_zf}
    \epsilon\brc{\bmm^{\rm zf}, \mB^{\rm zf} } = \frac{\sigma^2}{P} \bphi_\setDzf^\trp \brc{ \mH_\setDzf^\trp \mH_\setDzf^*}^{-1} \bphi_\setDzf.
\end{eqnarray}
\end{lemma}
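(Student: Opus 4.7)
The plan is to reduce \eqref{ZF_problem} to a standard equality-constrained minimum-norm problem once $\setDzf$ is known, and then apply the pseudoinverse formula to read off both $\bmm^{\rm zf}$ and the resulting MSE.

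First, I would pin down the scaling coefficients. By Lemma~\ref{lemma4} and the very definition of $\setDzf$, every device $\ell\in\setDzf$ hits its power constraint with equality, so $\abs{b_\ell^{\rm zf}}=\sqrt{P}$. Any residual phase of $b_\ell^{\rm zf}$ can be absorbed into a redefinition $\bh_\ell \mapsto \e^{\jj\theta_\ell}\bh_\ell$ paired with $b_\ell \mapsto \e^{-\jj\theta_\ell}b_\ell$, exactly as the phases $\varphi_\ell$ are handled in the proof of Lemma~\ref{opt_com}; under this reduction it suffices to take $b_\ell^{\rm zf}=\sqrt{P}$. The ZF identity \eqref{ZF_1} restricted to $\setDzf$ then becomes the linear system $\sqrt{P}\,\mH_\setDzf^\trp \bmm = \bphi_\setDzf$. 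Devices in $\setDbzf$ impose no active constraint on $\bmm$, since their scaling coefficients are supplied by \eqref{ZF_2} once $\bmm^{\rm zf}$ is fixed and their powers strictly undershoot $P$.

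Next, I would solve
\[
\min_{\bmm\in\setC^N} \norm{\bmm}^2 \quad\text{s.t.}\quad \sqrt{P}\,\mH_\setDzf^\trp \bmm = \bphi_\setDzf.
\]
Under the standing assumption $N\ge L\ge\abs{\setDzf}$, generic $\mH_\setDzf$ has full column rank and $\mH_\setDzf^\trp \mH_\setDzf^*$ is invertible, so this is the textbook minimum-norm problem for a wide linear system. Introducing a complex Lagrange multiplier and zeroing the Wirtinger derivative with respect to $\bmm^*$ returns $\bmm^{\rm zf}$ precisely as in \eqref{lemmazf_eqn}. Substituting this expression into \eqref{eq:ZF_Aggr_Err} and using the Hermiticity of $\mH_\setDzf^\trp \mH_\setDzf^*$ to collapse the sandwich product yields the MSE stated in \eqref{mse_zf}.

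I expect the first step to be the only delicate point: one must argue that restricting to $b_\ell^{\rm zf}=\sqrt{P}$ (real, positive) for $\ell\in\setDzf$ is truly without loss. This rests on the fact that both $\norm{\bmm}^2$ and the feasibility test $\abs{\phi_\ell/(\bmm^\trp\bh_\ell)}^2\le P$ are invariant under the gauge transformation $(\bh_\ell, b_\ell)\mapsto(\e^{\jj\theta_\ell}\bh_\ell, \e^{-\jj\theta_\ell}b_\ell)$, so the free phases of $\{b_\ell\}_{\ell\in\setDzf}$ can be fixed to zero before invoking the minimum-norm formula. After this gauge fixing, the remainder is routine linear algebra.
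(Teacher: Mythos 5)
Your proof is correct in substance but follows a genuinely different route from the paper. The paper's proof is a one-line reduction: it observes that ZF coordination is the MMSE scheme with the noise variance postulated to be zero, so Lemma~\ref{opt_com} and the regularized-ZF formula \eqref{eq:mmse_final} carry over with $\setD$ replaced by $\setDzf$ and $\sigma^2=0$, after which \eqref{mse_zf} follows by substitution into \eqref{eq:ZF_Aggr_Err}. You instead derive everything from first principles: you freeze $\abs{b_\ell}=\sqrt{P}$ on $\setDzf$, drop the inactive constraints of $\setDbzf$, and solve the resulting equality-constrained minimum-norm problem $\min\norm{\bmm}^2$ s.t.\ $\sqrt{P}\,\mH_\setDzf^\trp\bmm=\bphi_\setDzf$ via the pseudoinverse, which lands on the same expression. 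Your version is self-contained and makes explicit the structure (active vs.\ inactive constraints, full column rank under $N\ge L$) that the paper leaves implicit, at the cost of redoing work the paper reuses. The one point worth flagging is your gauge-fixing step: rotating $(\bh_\ell,b_\ell)\mapsto(\e^{\jj\theta_\ell}\bh_\ell,\e^{-\jj\theta_\ell}b_\ell)$ indeed leaves the problem invariant, but the minimum-norm formula applied to the rotated channels returns \eqref{lemmazf_eqn} with $\bphi_\setDzf$ replaced by a phase-rotated version; to recover \eqref{lemmazf_eqn} literally you still need the optimal phases of $\{b_\ell\}_{\ell\in\setDzf}$ to vanish, which is exactly the "$\varphi_\ell=0$ without loss of generality" assertion the paper relegates to a footnote in the proof of Lemma~\ref{opt_com}. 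So you have not introduced a gap beyond the one the paper itself tolerates, but your gauge argument does not close it either.
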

\begin{proof}
	The ZF scheme is a mismatched MMSE coordination scheme in which $\sigma^2$ is postulated to be zero. This implies that Lemma~\ref{opt_com} is also valid for the ZF scheme by replacing $\setD$ with $\setDzf$. Consequently, $\bmm^\zf$ is given in terms of $\setDzf$ from \eqref{eq:mmse_final} by replacing $\setD$ with $\setDzf$ and setting $\sigma^2$ to zero. The aggregation error is then derived by substituting $\bmm^\zf$ into \eqref{eq:ZF_Aggr_Err}. 
\end{proof}

Lemma \ref{lemma_zf_m} implies the same fact about ZF coordination: similar to the MMSE scheme, the root problem in ZF coordination is to find those devices that transmit with the maximum power. It is worth mentioning that $\setD$ and $\setDzf$ are in general not identical; nevertheless, the problem of finding these discrete subsets also reduces to an integer programming problem. For the sake of brevity, we refer to this root problem as the \textit{subset selection} problem in the remaining of this paper. 


\section{Subset Selection for ZF Coordination}
\label{sec:4}
The computational complexity of the subset selection problem grows exponentially with the number of devices: there are in total $2^L- 1$ possible choices\footnote{Note that the empty set does not occur according to Lemmas~\ref{lemma1} and \ref{lemma4}.} for the device subset. Thus, the complete search algorithm poses exponential complexity to the system. We hence look for a low-complexity approach to find a good choice for the device subset.

We start the derivations by defining the concept of a \textit{feasible} setting for ZF coordination:
\begin{definition}
	\label{def:1}
The receiver $\bmm$ and coefficients $\set{b_\ell}$ describe a feasible ZF setting, if $\abs{b_\ell}^2 \leq P$ and $\bmm^\trp \bh_\ell b_\ell = \phi_\ell$ for $\ell \in \dbc{L}$.
\end{definition}
A feasible ZF setting only applies ZF on the received signal, i.e., it sets the first term in the right hand side of \eqref{eq:Aggr_Err1} zero, and ignores the optimization in \eqref{ZF_problem} that minimizes the aggregation error. In this respect, we can see the ZF coordination scheme as a feasible ZF setting with minimal aggregation error. It is further easy to show that a feasible ZF setting is uniquely specified by its set of devices transmitting with power $P$: 
\begin{lemma}
	\label{lemFeasibleZF}
Let $\setS$ denote the set of devices which transmit with power $P$ in a feasible ZF setting with receiver $\bmm$ and scaling coefficients $\set{b_\ell}$. Then, $\bmm$ is given by
\begin{align}
	\label{feasibleZF}
	\bmm =\frac{1}{\sqrt{P}} \mH_\setS^* \brc{ \mH_\setS^\trp \mH_\setS^*}^{-1} \bphi_\setS,
\end{align} 
and $b_\ell$ is determined from \eqref{ZF_1} for $\ell\in\dbc{L}$.
\end{lemma}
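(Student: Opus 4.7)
The plan is to reduce the claim to a direct algebraic statement: pin $\abs{b_\ell}=\sqrt{P}$ for $\ell\in\setS$, read the ZF identity \eqref{ZF_1} on $\setS$ as a linear system in $\bmm$, and recognise its minimum-norm solution as a right Moore--Penrose pseudo-inverse. This mirrors exactly what Lemma~\ref{lemma_zf_m} did for the specific subset $\setDzf$, so the argument is essentially a relabelling combined with the WLOG phase-fixing already used inside the proof of Lemma~\ref{opt_com}.

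Concretely, I would proceed in three steps. First, for each $\ell\in\setS$ the feasibility condition gives $b_\ell = \sqrt{P}\,\e^{\jj\varphi_\ell}$. Since the pair $(\bmm,\set{b_\ell})$ can be rescaled by any common complex factor without affecting either requirement in Definition~\ref{def:1}, I can absorb the phases into $\bmm$ and set $\varphi_\ell = 0$ for $\ell\in\setS$, exactly as in the footnote of the proof of Lemma~\ref{opt_com}. Substituting $b_\ell=\sqrt{P}$ into \eqref{ZF_1} for $\ell\in\setS$ then collapses the ZF constraints on $\setS$ into the single linear system $\sqrt{P}\,\mH_\setS^\trp\bmm=\bphi_\setS$. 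Second, the minimum-norm solution of this fat system is the standard right-pseudo-inverse expression
$$\bmm = \frac{1}{\sqrt{P}}\mH_\setS^*\brc{\mH_\setS^\trp\mH_\setS^*}^{-1}\bphi_\setS,$$
which is precisely \eqref{feasibleZF}; invertibility of $\mH_\setS^\trp\mH_\setS^*$ is guaranteed by the standing assumption $N\ge L$ together with generic full-rankness of the channel matrix $\mH$. Third, with $\bmm$ in hand, the remaining coefficients $b_\ell$ for $\ell\notin\setS$ are read off by rearranging the ZF identity $\bmm^\trp\bh_\ell b_\ell=\phi_\ell$ as in \eqref{ZF_1}, and the same formula recovers $b_\ell=\sqrt{P}$ on $\setS$, closing the argument.

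The main obstacle — and really the only step that needs explicit justification — is the uniqueness hidden in ``\textit{is given by}'' in the lemma statement. The reduced system $\sqrt{P}\,\mH_\setS^\trp\bmm=\bphi_\setS$ is underdetermined as soon as $\abs{\setS}<N$, so a priori an entire affine family of receivers satisfies the $\setS$-constraints, each inducing, through \eqref{ZF_1}, a different set of $b_\ell$ on $\setDbzf$. The clean resolution is to note that components of $\bmm$ orthogonal to $\mathrm{span}\set{\bh_\ell:\ell\in\setS}$ leave the $\setS$-constraints untouched but strictly increase $\norm{\bmm}^2$, and hence the aggregation error in \eqref{eq:ZF_Aggr_Err}; so the canonical feasible ZF setting associated with $\setS$ is the minimum-norm one, which is uniquely given by the pseudo-inverse formula. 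Making this normalisation explicit is the only subtlety; everything else is routine.
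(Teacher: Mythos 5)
Your argument is correct and follows essentially the same route as the paper, which simply defers to Lemma~\ref{lemma_zf_m} (and through it to Lemma~\ref{opt_com} with $\sigma^2=0$): the active constraints on $\setS$ collapse to the linear system $\sqrt{P}\,\mH_\setS^\trp\bmm=\bphi_\setS$, whose minimum-norm solution is the stated right-pseudo-inverse formula, with the remaining $b_\ell$ read off from \eqref{ZF_1}. Your explicit remark that for a merely \emph{feasible} (non-optimal) setting the receiver is not literally unique when $\abs{\setS}<N$, so that the lemma's ``is given by'' must be read as the canonical minimum-norm choice, correctly identifies and patches a point the paper's one-line proof glosses over.
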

\begin{proof}
	The proof takes exactly the same steps as in Lemma~\ref{lemma_zf_m}. We hence skip the proof.
\end{proof}

Considering the definition, we now focus on the number of feasible ZF setting for coordination $C^\zf$. Following Lemmas~\ref{lemma1} and \ref{lemma4}, it is readily concluded that \begin{align}
	C^\zf \leq 2^L-1.
\end{align}
It is further easy to show that $C^\zf$ can be considerably smaller than $2^L-1$: for a non-empty subset $\setS \subseteq \dbc{L}$, determine~${\bmm}$ and $b_\ell$ from Lemma~\ref{lemFeasibleZF}. The scaling coefficients are however not guaranteed to satisfy the transmit power constraint. Thus, the described setting is not necessarily feasible for ZF coordination. This observation implies that not all the $2^L-1$ settings determined by applying ZF over the choices of $\setS$ describe a feasible ZF setting, and hence $C^\zf$ can be in general considerably smaller than $2^L-1$.

Considering the above behavior, an algorithmic approach for approximating the ZF scheme is to track down the feasible ZF settings (or at least a subset of them). If there are only a few of them, i.e., $C^\zf$ is small; then, the ZF coordination scheme is readily found by a finite-dimensional search. The following theorem describes the conditions under which the feasible ZF settings and the ZF scheme are explicitly found.


\begin{theorem} \label{th1}
Let $s$ be index of the device whose channel norm is smallest, i.e., 
\begin{align}
	s = \argmin_{\ell \in \dbc{L} } \norm{\bh_\ell},
\end{align}
and define $\rmg_\ell = \bh_s^\her \bh_\ell / \phi_\ell$ for $\ell\in\dbc{L}$. Assume that $\abs{\rmg_s} \leq \abs{\rmg_\ell}$ for all $\ell \in \dbc{L}$ with equality holding for devices in set $\setS\subseteq  \dbc{L}$. Then, $\setDzf = \setS$, i.e., only the device in $\setS$ transmit with the maximum power $P$. 
The ZF coordination scheme in this case is described with
\begin{eqnarray}
\bmm^\zf = \frac{\phi_s \bh_s^*  }{\sqrt{P} \norm{\bh_s}^2 } .
\end{eqnarray}
\end{theorem}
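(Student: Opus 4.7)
The plan is to verify the stated closed form by (i) deriving a universal Cauchy--Schwarz lower bound on $\norm{\bmm}^2$ for any ZF-feasible receiver, (ii) checking that the proposed $\bmm^\zf$ attains this bound while satisfying every per-device power constraint, and (iii) reading off $\setDzf$ from the feasibility computation in step (ii).

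For step (i), I would note that any $\bmm$ feasible in \eqref{ZF_problem} must satisfy $\abs{\bmm^\trp \bh_s}^2 \geq \abs{\phi_s}^2/P$ from device $s$'s constraint, so Cauchy--Schwarz yields
\begin{align}
\norm{\bmm}^2 \;\geq\; \frac{\abs{\bmm^\trp \bh_s}^2}{\norm{\bh_s}^2} \;\geq\; \frac{\abs{\phi_s}^2}{P\,\norm{\bh_s}^2}.
\end{align}
The reason $s$ gives the tightest such Cauchy--Schwarz bound is that combining the hypothesis $\abs{\rmg_s}\leq\abs{\rmg_\ell}$ with $\abs{\bh_s^\her \bh_\ell}\leq \norm{\bh_s}\norm{\bh_\ell}$ yields $\abs{\phi_\ell}/\norm{\bh_\ell}\leq\abs{\phi_s}/\norm{\bh_s}$ for every $\ell$; this is also exactly where the minimum-norm choice of $s$ enters. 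Substituting the candidate $\bmm^\zf = \phi_s \bh_s^*/(\sqrt{P}\,\norm{\bh_s}^2)$ immediately gives $\norm{\bmm^\zf}^2 = \abs{\phi_s}^2/(P\,\norm{\bh_s}^2)$, matching the lower bound.

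For step (ii), feasibility follows from a direct evaluation: for each $\ell$,
\begin{align}
\bmm^{\zf\,\trp}\bh_\ell \;=\; \frac{\phi_s\,\bh_s^\her \bh_\ell}{\sqrt{P}\,\norm{\bh_s}^2} \;=\; \frac{\phi_s\,\phi_\ell\,\rmg_\ell}{\sqrt{P}\,\norm{\bh_s}^2},
\end{align}
hence $\abs{b_\ell^\zf}^2 = \abs{\phi_\ell/(\bmm^{\zf\,\trp}\bh_\ell)}^2 = P\,\abs{\rmg_s}^2/\abs{\rmg_\ell}^2 \leq P$ by hypothesis, with equality precisely for $\ell\in\setS$. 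This simultaneously certifies that $\bmm^\zf$ is feasible and optimal, and identifies the set of devices that saturate the power budget as $\setDzf=\setS$, completing step (iii).

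The only subtle point I anticipate is the non-uniqueness of the ZF minimizer: any $\bmm=\alpha\bh_s^*$ with $\abs{\alpha}=\abs{\phi_s}/(\sqrt{P}\,\norm{\bh_s}^2)$ also attains the minimum norm. Fortunately, the moduli $\abs{b_\ell^\zf}^2$ depend only on $\abs{\alpha}$, so the identification $\setDzf=\setS$ is independent of the phase choice, and the stated $\bmm^\zf$ corresponds to the canonical phase that produces the real positive coefficient $b_s^\zf=\sqrt{P}$. Modulo this observation, the argument is essentially a one-line application of Cauchy--Schwarz in which the assumption $\abs{\rmg_\ell}\geq\abs{\rmg_s}$ is precisely what aligns the device-$s$ lower bound with global feasibility.
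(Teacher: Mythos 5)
Your proof is correct and follows essentially the same route as the paper: the paper establishes the lower bound $\norm{\bmm}^2 \geq \abs{\phi_s}^2/(P\norm{\bh_s}^2)$ by decomposing $\bmm$ into its component along $\bh_s^*$ and an orthogonal remainder, which is exactly your Cauchy--Schwarz step written out, and the feasibility computation $b_\ell = \sqrt{P}\,\rmg_s/\rmg_\ell$ is identical. Your explicit remark on the phase non-uniqueness of the minimizer is a small point the paper leaves implicit, but it does not change the argument.
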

\begin{proof}
	We start the proof by showing that the given setting is feasible for ZF coordination. To this end, let $\bmm = \bmm^\zf$ as given in the theorem and $b_s = \sqrt{P}$. We first note that
	\begin{align}
		\bmm^\trp \bh_s b_s = \phi_s,
	\end{align}
which satisfies the ZF constraint \eqref{ZF_1} for device $s$. For other devices, we apply ZF by setting
	\begin{align}
	 b_\ell &= \frac{ \phi_\ell}{\bmm^\trp \bh_\ell} = \frac{ \phi_\ell \norm{\bh_s}^2 }{\phi_s \bh_s^\her \bh_\ell} \sqrt{P} = \frac{ \rmg_s }{\rmg_\ell} \sqrt{P}.
\end{align}
This is hence concluded that under the constraint $\abs{\rmg_s} \leq \abs{\rmg_\ell}$, this ZF setting is feasible. Moreover, it is concluded that in this case, all devices with $\abs{\rmg_\ell} = \abs{\rmg_s}$ transmit with power $P$.

We now show that the described feasible ZF setting is in fact the ZF scheme with minimal aggregation error. To this end, let $\bmm^\zf$ and $\set{b_\ell^\zf}$ denote the receiver and scaling coefficients of the ZF scheme, respectively. We decompose $\bmm^\zf$ in terms of its projection on $\bh_s^*$ and its component in the null space of $\bh_s^*$, i.e., 
\begin{align}
	\bmm^\zf = c \; \bh_s^* + c_{\perp} \bh_s^{\perp*}
\end{align}
for some constants $c$ and $c_\perp$ and $\bh_s^\perp$ being a vector orthogonal to $\bh_s$, i.e., $\bh_s^\her \bh_s^\perp = 0$. To fulfill the ZF condition for device $s$, we need to have $\bmm^{\zf\trp} \bh_s  =0$. This concludes that 
\begin{eqnarray}
	 c = \frac{\phi_s}{\norm{\bh_s}^2 b_s },
\end{eqnarray}
and noting that $\abs{b_s}^2\leq P$, we should set 
\begin{eqnarray}
	\abs{c} \geq \frac{\phi_s}{\norm{\bh_s}^2 \sqrt{P} }. \label{proof:1}
\end{eqnarray}
Considering \eqref{eq:ZF_Aggr_Err}, we conclude that the aggregation error is 
\begin{align}
	\epsilon\brc{\bmm^\zf, \set{b_\ell^\zf} } = \sigma^2 \brc{ \abs{c}^2 \; \norm{\bh_s}^2 + \abs{c_{\perp}}^2 \norm{\bh_s^{\perp}}^2 }
\end{align}
that is minimized by setting $\abs{c}$ to its lower bound in \eqref{proof:1} and $c_\perp = 0$. This describes the feasible ZF setting with minimal aggregation error. Hence, the proof is concluded.
\end{proof}

Theorem~\ref{th1} describes a sufficient condition under which the ZF scheme is determined closed-form.  This condition is very likely to hold in practical scenarios in which few devices have significantly larger path-losses as compared with the others edge devices in the network. We consider an example of such a scenario throughout the numerical investigations. It is worth mentioning that in a homogeneous setting, i.e., $\phi_\ell = \phi$ for $\ell\in\dbc{L}$, with a single-antenna PS, the sufficient condition in Theorem \ref{th1} always hold. This is obvious, since in this case only the device with weakest channel should transmit with the maximum power for ZF coordination\footnote{Note that for MMSE coordination, this is in general not true, as also shown for a mathematically-similar problem in \cite{liu2020over}.}.

\subsection{ZF Coordination without Closed-Form Solution}
\label{sec:ZF-noClose}
If the sufficient condition in Theorem~\ref{th1} does not hold, ZF coordination is found by searching over all feasible settings. This search procedure is exponentially hard. We hence develop a tractable algorithm which approximates the solution by polynomial complexity. In this respect, the following lemma, can be further useful to limit our search
\begin{lemma}
\label{lemm_ZF_2}
If the condition in Theorem~\ref{th1} does not hold; then, there are at least two devices which transmit with the maximum power $P$ in ZF coordination.
\end{lemma}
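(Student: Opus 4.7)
The plan is to argue by contradiction: I assume that in ZF coordination only a single device, say device $k$, transmits at the maximum power $P$, and then derive that the condition of Theorem~\ref{th1} must hold, contradicting the hypothesis of the lemma.

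First, I would apply Lemma~\ref{lemFeasibleZF} to the singleton $\setS = \set{k}$ to obtain $\bmm^{\zf} = \phi_k\,\bh_k^*/(\sqrt{P}\,\norm{\bh_k}^2)$ and, via the ZF identity \eqref{ZF_1}, the scaling coefficients $b_\ell = \phi_\ell\sqrt{P}\,\norm{\bh_k}^2/(\phi_k\,\bh_k^\her \bh_\ell)$ for $\ell \neq k$. Since by assumption only device $k$ sits at the maximum power, the inequality $\abs{b_\ell}^2 < P$ must hold strictly for every $\ell \neq k$. Rearranging yields $\abs{\phi_k}\,\abs{\bh_k^\her \bh_\ell} > \abs{\phi_\ell}\,\norm{\bh_k}^2$, and invoking Cauchy--Schwarz $\abs{\bh_k^\her \bh_\ell} \leq \norm{\bh_k}\,\norm{\bh_\ell}$ then gives
\begin{align*}
	\frac{\abs{\phi_\ell}}{\norm{\bh_\ell}} < \frac{\abs{\phi_k}}{\norm{\bh_k}} \quad \text{for all } \ell \neq k,
\end{align*}
so device $k$ uniquely maximizes the ratio $\abs{\phi_\ell}/\norm{\bh_\ell}$ across the network.

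Second, I would translate this ordering into a statement about aggregation errors via Lemma~\ref{lemma_zf_m}. The singleton $\setS = \set{s}$, if feasible, would attain aggregation error $\sigma^2 \abs{\phi_s}^2/(P\,\norm{\bh_s}^2)$, which by the ordering above is strictly smaller than that of $\setS = \set{k}$. The optimality of $\setS = \set{k}$ in ZF coordination therefore forces the singleton $\set{s}$ to be infeasible. The final step is to unpack this infeasibility and show that it implies $\abs{\rmg_s} \leq \abs{\rmg_\ell}$ for all $\ell \in \dbc{L}$, which is precisely the condition of Theorem~\ref{th1}, thereby yielding the desired contradiction.

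The main obstacle lies in this closing implication: infeasibility of the singleton $\set{s}$ is a priori a weaker statement than the full ordering required by Theorem~\ref{th1}, since it only certifies that \emph{some} device violates the corresponding ZF constraint. To close the gap I expect to combine the channel-norm minimality $\norm{\bh_s} \leq \norm{\bh_\ell}$ with strict Cauchy--Schwarz $\abs{\bh_s^\her \bh_\ell} < \norm{\bh_s}\,\norm{\bh_\ell}$ in the non-colinear case, handling the colinear case separately to show that both $s$ and $k$ would be at maximum power. As a cleaner alternative, I would try a direct perturbation argument in the spirit of Lemma~\ref{lemma1}: starting from the $\setS = \set{k}$ solution, perturb $\bmm^{\zf}$ along $\bh_s^*$ so as to reduce $\norm{\bmm}^2$ monotonically until a second ZF constraint becomes tight; this would produce a feasible setting with strictly smaller aggregation error and at least two devices at power $P$, directly contradicting $\abs{\setDzf} = 1$.
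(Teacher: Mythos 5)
The paper offers no argument for this lemma beyond ``readily given by contradiction,'' so there is no proof to compare against; judged on its own, your proposal has a genuine logical gap in its second half. Unpacking Lemma~\ref{lemFeasibleZF} for the singleton $\setS=\set{s}$ (exactly as in the first part of the proof of Theorem~\ref{th1}) gives $b_\ell = \brc{\rmg_s/\rmg_\ell}\sqrt{P}$, so $\set{s}$ is a feasible ZF setting \emph{if and only if} $\abs{\rmg_s}\leq\abs{\rmg_\ell}$ for all $\ell$; that is, the condition of Theorem~\ref{th1} is logically equivalent to feasibility of $\set{s}$. Your closing step therefore asks to show that \emph{infeasibility} of $\set{s}$ implies the Theorem~\ref{th1} condition, i.e., that the negation of a statement implies the statement. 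This can never be established: the conclusion you correctly reach in your second paragraph (that $\set{s}$ must be infeasible) is simply consistent with the lemma's hypothesis and produces no contradiction. The argument cannot close along this route.

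The usable contradiction is already contained in your first step, but only in the homogeneous case. From $\abs{\phi_\ell}\norm{\bh_k}^2<\abs{\phi_k}\abs{\bh_k^\her\bh_\ell}$ and Cauchy--Schwarz you obtain $\abs{\phi_\ell}/\norm{\bh_\ell}<\abs{\phi_k}/\norm{\bh_k}$ for $\ell\neq k$; when $\phi_\ell\equiv\phi$ (the case the paper simulates) this reads $\norm{\bh_k}<\norm{\bh_\ell}$, which forces $k=s$, and then feasibility of the optimal singleton $\set{k}=\set{s}$ \emph{is} the Theorem~\ref{th1} condition --- no error comparison is needed. For heterogeneous $\set{\phi_\ell}$ the identification $k=s$ fails and the lemma as stated admits counterexamples: take $N=L=2$, $P=1$, $\bh_1=(1,0)^\trp$, $\bh_2=(1,1)^\trp$, $\phi_1=1$, $\phi_2=3$. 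Then $s=1$ and the condition fails since $\abs{\rmg_1}=1>1/3=\abs{\rmg_2}$, yet $\set{2}$ is a feasible singleton ($\abs{b_1}^2=4/9<1$) with aggregation error $4.5\,\sigma^2$, strictly below the $5\,\sigma^2$ of the only other feasible setting $\set{1,2}$; hence ZF coordination puts exactly one device at maximum power. The same example defeats your fallback perturbation argument, since $\set{2}$ is a genuine global minimizer of \eqref{ZF_problem} and no norm-decreasing perturbation toward a second active constraint exists. A correct proof must therefore either restrict to homogeneous $\phi_\ell$, or reinterpret ``the condition of Theorem~\ref{th1}'' as ``some singleton describes a feasible ZF setting,'' under which reading the lemma follows immediately from your first step.
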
 
\begin{proof}
	The proof is readily given by contradiction. We hence skip it here.
\end{proof}

Considering Lemma~\ref{lemm_ZF_2} and the earlier results on ZF coordination, we can have the following statements about the optimal ZF scheme, when it is not given by Theorem~\ref{th1}:
\begin{itemize}
	\item It is a feasible ZF setting.
	\item  There are at least two devices with transmit power $P$.
\end{itemize}
We further note that when all the devices transmit with power $P$, a feasible ZF setting is always guaranteed to be determined: let $b_\ell =\sqrt{P}$ for $\ell\in\dbc{L}$ and $\bmm$ to be the one given in Lemma~\ref{lemma_zf_m} for $\setDzf = \dbc{L}$. These findings leads to this heuristic conclusion that a good approximate for the optimal ZF scheme is given by finding a feasible ZF setting with minimal number of devices transmitting with power $P$. This heuristic conclusion leads us to a tree-based algorithm which is presented in the sequel.

\begin{remark}\label{remark1}
As mentioned in Section~\ref{sec1}, throughout the paper we assume $N\geq L$. This however does not restrict the scope of analyses. In fact, the derivations are further valid for settings with $N<L$ when we limit the subset selection problem to be solved for subsets with $\abs{\setS} \leq N$. 
\end{remark}


\subsection{A Tree-Based Search Scheme}
\label{sec:treeZF}
We now present a tree-based algorithm to find the optimal ZF scheme. Although its core idea comes from our heuristic conclusions, it is shown that with full complexity this algorithm is exact. Before we present the algorithm, we state the following lemma which serves as the analytical foundation of the tree-based algorithm: 


\begin{lemma}\label{lemma_two_set}
	Let $\brc{\bmm_1, \set{b_{1\ell}}}$ and $\brc{\bmm_2, \set{b_{2\ell}}}$ describe~two~feasible ZF setting for coordination. Denote the set of devices transmitting with power $P$ in these two settings with $\setS_1$ and $\setS_2$, respectively. If $\setS_1 \subseteq \setS_2$; then, 
	\begin{align}
		\epsilon \brc{\bmm_1, \set{b_{1\ell}}} \leq \epsilon \brc{\bmm_2, \set{b_{2\ell}}}.
	\end{align}
\end{lemma}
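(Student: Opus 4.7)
The plan is to reduce the inequality to a statement about the Euclidean norms of the two receivers and then exploit the minimum-norm characterization of $\bmm$ provided by Lemma~\ref{lemFeasibleZF}. First, I would observe that a feasible ZF setting, by Definition~\ref{def:1}, satisfies $\bmm^\trp \bh_\ell b_\ell = \phi_\ell$ for every $\ell\in\dbc{L}$, so the first term in \eqref{eq:Aggr_error} vanishes. Consequently, for $i=1,2$,
\begin{align*}
	\epsilon\brc{\bmm_i,\set{b_{i\ell}}} = \sigma^2 \norm{\bmm_i}^2,
\end{align*}
and the lemma reduces to showing $\norm{\bmm_1}^2 \leq \norm{\bmm_2}^2$ whenever $\setS_1\subseteq \setS_2$.

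Next, I would invoke Lemma~\ref{lemFeasibleZF} to pin down the explicit form of $\bmm_1$ and $\bmm_2$. That lemma identifies $\bmm_i$ with the closed-form expression \eqref{feasibleZF}, which is precisely the pseudo-inverse solution to the linear system $\sqrt{P}\,\mH_{\setS_i}^\trp \bmm_i = \bphi_{\setS_i}$, i.e., the unique minimum-norm vector satisfying the ZF constraints for those devices transmitting at peak power. A short computation with the Moore--Penrose pseudo-inverse (which I would include in one line) confirms this characterization.

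The core step is then the following monotonicity argument. Because $\setS_1\subseteq \setS_2$, every equation in the system defining $\bmm_1$ is also an equation in the system defining $\bmm_2$, i.e.,
\begin{align*}
	\sqrt{P}\,\mH_{\setS_2}^\trp \bmm_2 = \bphi_{\setS_2} \;\;\Longrightarrow\;\; \sqrt{P}\,\mH_{\setS_1}^\trp \bmm_2 = \bphi_{\setS_1}.
\end{align*}
Hence $\bmm_2$ is a feasible point of the linear system whose minimum-norm solution is $\bmm_1$. By the defining property of the minimum-norm solution, $\norm{\bmm_1}\leq \norm{\bmm_2}$, which combined with the earlier reduction yields the desired inequality.

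The main obstacle I anticipate is a minor bookkeeping issue with phases: strictly speaking, for $\ell\in\setS_i$, the constraint is $b_{i\ell}=\sqrt{P}\,\e^{\jj\varphi_{i\ell}}$, so the ZF condition reads $\bmm_i^\trp\bh_\ell = \phi_\ell\,\e^{-\jj\varphi_{i\ell}}/\sqrt{P}$ rather than $\phi_\ell/\sqrt{P}$. This is already handled inside Lemma~\ref{lemFeasibleZF} by absorbing the phases into $\bphi_{\setS_i}$ (as also done in the proof of Lemma~\ref{opt_com}, see the footnote preceding \eqref{eq:obj_mse}), so I would simply remark on this convention and proceed. With that convention in place the monotonicity argument goes through verbatim, completing the proof.
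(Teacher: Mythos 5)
Your proof is correct and follows essentially the same route as the paper's: the paper argues that the feasible region of the (minimum-norm) problem for $\setS_1$ contains that for $\setS_2$, which is exactly your observation that $\bmm_2$ satisfies the linear system whose minimum-norm solution is $\bmm_1$. You simply make explicit the two steps the paper leaves implicit (the reduction $\epsilon = \sigma^2\norm{\bmm}^2$ and the pseudo-inverse/minimum-norm characterization from Lemma~\ref{lemFeasibleZF}), plus the harmless phase convention.
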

\begin{proof}
The proof is readily concluded by the fact that the feasible region of the error minimization problem for subset $\setS_1$ contains the feasible region of the optimization problem defined with subset $\setS_2$. This implies that the error obtained by the former minimization is smaller than the one in the latter. This concludes the proof.
\end{proof}

The above result implies that while searching for the optimal ZF coordination, those feasible ZF settings whose corresponding subsets include other feasible subsets can be further ignored\footnote{From Lemma~\ref{lemm_ZF_2}, we~further~conclude that this search is restricted to $\abs{S} \geq 2$.}. We hence can graphically visualize the search for the optimal feasible setting via a \textit{tree}: consider a graph whose nodes denote those subsets of $\dbc{L}$ that describe feasible ZF settings. The node corresponding to $\setS \subseteq \dbc{L}$ is connected to its \textit{children} which are the subsets of $\setS$ describing feasible ZF settings. Any set $\setS\neq \dbc{L}$ is further connected to its \textit{parent} which is the feasible ZF set $\tilde{\setS}$ containing $\setS$, i.e., $\setS \subset \tilde{\setS}$. If there are multiple choices for the parent of $\setS$; then, the set with largest cardinality is selected\footnote{If there are multiple nodes with largest cardinality, the we choose the parent among them at random.}. This way we built a tree whose root is $\dbc{L}$ and whose cardinality of the sets corresponding to the nodes decreases as we move towards the leaves. 

%

\begin{example}\label{exp1}
Consider a network with $L=4$ edge devices and $N=5$ antennas at the PS. Let $\phi_\ell = 0.25$ for all devices and $P=1$. The channel matrix is further as follows
\begin{eqnarray}
	\mH = 
     \begin{bmatrix}
   0.30&0.46&0.39&0.19\\
-0.55&0.32&-0.52&0.04\\
0.32&-0.14&-0.48&-0.11\\
0.72&0.13&-0.37&0.18\\
0.21&-0.36&-1.32&-0.23\\
   \end{bmatrix}.
   \end{eqnarray}
The ZF tree for this network is shown in Fig. \ref{fig_tree}. As observed, the root is $\set{1,2,3,4}$.   
\begin{figure}
\begin{center}
\includegraphics[width=3in,angle=0]{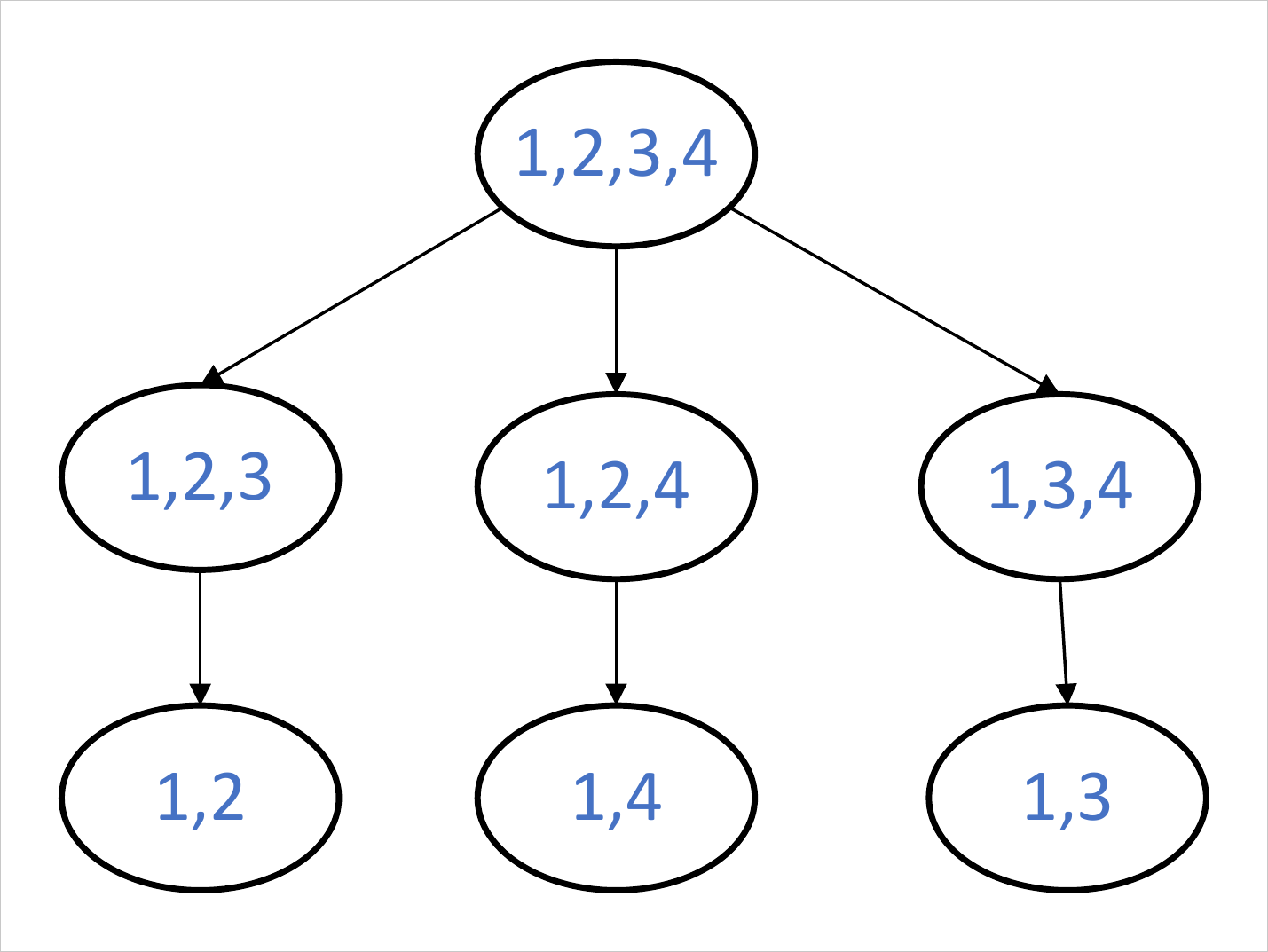}
\caption{Tree representation of feasible ZF settings for Example \ref{exp1}.}
\end{center}
\label{fig_tree}
\end{figure}
\end{example}

Lemma \ref{lemma_two_set} implies that the optimal ZF scheme with minimal aggregation error is among the leaves of the tree representing ZF feasible settings. Therefore, the problem of finding the optimal ZF scheme reduces to a search for the leaf with minimal error, i.e., the leaf with minimal norm of the receiver. It is worth mentioning that this search is still exponentially complex, since in general construction of the tree requires $2^L-1$ feasibility checks. Its visual representation however describes a tractable approximation approach: starting from the root, we move towards the leaves in a step-wise fashion. In each step, we go from a parent to the child whose aggregation error is minimum among all the children. The search is finally over as we arrive at a leaf. This leaf is taken as the approximate of the solution.

\subsection{Approximate ZF Coordination}
The proposed tree-based search lead to an approximation of ZF coordination which we refer to as approximate ZF (AZF). The details on this algorithm are as follows:
\begin{enumerate}
	\item We initiate the search at the root of the tree by setting $\setS^{(0)} = \dbc{L}$. The initial receiver $\bmm^{(0)}$ and scaling coefficients are further found via Lemma~\ref{lemFeasibleZF}. 
	\item At step $i$, we consider all subsets of $\setS^{(i)}$ that differ with $\setS^{(i)}$ in only one element. For each subset, we determine the receiver from $\bmm^{(i)}$ and the corresponding channel vectors via rank-one update of matrix inverse; see \cite{hager1989updating}. We then calculate the scaling coefficients and check if they satisfy the transmit power constraint. If the power constraint is satisfied, we collect the subset as a feasible ZF subset of $\setS^{(i)}$.
	\item Among the feasible ZF subsets of $\setS^{(i)}$, we set $\setS^{(i+1)}$ to be the one whose receiver has minimal norm. 
\end{enumerate}
The algorithm stops at step $i_{\rm T}$, where there is no feasible ZF subsets for the set $\setS^{(i_{\rm T})}$. 

The AZF scheme needs to search among $L\brc{L+1}/2$ subsets in the worst-case scenario, and hence is computationally tractable. It can be further extended by considering multiple dominant paths of the tree. Such an extension can potentially lead to a more accurate approximation of the ZF scheme~at~the expense of higher computational complexity.

\section{Subset Selection for MMSE Coordination} \label{ammse_sec}
We now get back to the MMSE scheme and utilize the framework developed for ZF coordination to design a tractable algorithm for approximating the MMSE scheme. We start the derivations by defining the less-intuitive concept of a \textit{feasible MMSE setting}. Before stating the definition, let us look back to the key properties of the MMSE scheme: from Lemmas~\ref{lemma1} and \ref{lemma2}, we know that the scaling coefficients in this scheme are either $\sqrt{P}$ or satisfy the ZF constraint. We further note that given the set of devices transmitting with power $P$, the receiver can be found via \eqref{eq:mmse_final}. These features are dual to those satisfied by the ZF scheme. We hence invoke Definition~\ref{def:1} and define a feasible setting for MMSE coordination as follows:
\begin{definition}
	\label{def:2}
	Consider the subset of devices $\setS\subseteq \dbc{L}$ whose cardinality is $\abs{\setS} = S$. Let $\bmm$ be
	\begin{align}
		\bmm = \frac{1}{\sqrt{P}} \mH_\setS^* \brc{ \mH_\setS^\trp \mH_\setS^* + \frac{\sigma^2}{P} \mI_S }^{-1} \bphi_\setS.
	\end{align}
	Let $\abs{b_\ell} = \sqrt{P}$ for $\ell\in\setS$ and determine $b_\ell$ for $\ell\notin\setS$ from the equation $\bmm^\trp \bh_\ell b_\ell = \phi_\ell$. The set $\setS$ is said to describe a feasible MMSE setting for coordination, if 
%
%
 $\abs{b_\ell}^2 < P$  for $\ell\notin\setS$. 
\end{definition}

Lemmas~\ref{lemma1} and \ref{lemma2} along with \eqref{eq:mmse_final} imply that the subset of devices transmitting with power $P$ in the MMSE scheme,~i.e., $\setD$, describes a feasible MMSE setting. In other words, similar to ZF coordination, this definition relaxes the MMSE scheme by dropping the optimality constraint. We next follow the same steps as for ZF coordination to show that the MMSE scheme is given by searching over the tree of feasible settings for the smallest subset.

%
%

\subsection{Properties of Feasible MMSE Settings}
Similar to Theorem \ref{th1}, a sufficient condition can be derived for MMSE coordination, under which the set $\setD$, and thus the MMSE scheme, is determined in closed-form. 
\begin{theorem} \label{mmse_1}
	Let $s$ be index of the device whose channel norm is smallest, i.e., 
	\begin{align}
		s = \argmin_{\ell \in \dbc{L} } \norm{\bh_\ell}.
	\end{align}
	 Define the vector $\bg$ as
	 \begin{align}
	 	\bg = \mPhi^{-1} \brc{ \mH^\her \bh_s + \frac{\sigma^2}{P} \bo_s },
	 \end{align}
	 where $\mPhi = \Diag\set{\phi_\ell}$ and $\bo_s \in \set{0,1}^L$ is a vector with a single non-zero entry at index $s$. Denote the entry $\ell$ of $\bg$ with $\rmg_\ell$ and assume that $\abs{\rmg_s} \leq \abs{\rmg_\ell}$ for all $\ell \in \dbc{L}$ with equality holding for devices in set $\setS\subseteq  \dbc{L}$. Then, $\setD = \setS$, i.e., only the device in $\setS$ transmit with the maximum power $P$. The MMSE coordination scheme in this case is described with
	\begin{eqnarray}
		\bmm^\star = \frac{ \sqrt{P} \phi_s \bh_s^*  }{P \norm{\bh_s}^2 + \sigma^2 }.
	\end{eqnarray}
%
\end{theorem}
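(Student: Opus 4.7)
The plan is to mirror the two-step proof of Theorem~\ref{th1}: first verify that the prescribed $(\bmm^\star,\mB)$ is a feasible MMSE setting in the sense of Definition~\ref{def:2}, and then show that it attains the minimum of the aggregation error~\eqref{eq:Aggr_error}, so that it must coincide with the MMSE scheme.

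For the feasibility step, I would set $b_s = \sqrt{P}$ and determine $b_\ell = \phi_\ell/(\bmm^{\star\trp}\bh_\ell)$ for every $\ell\neq s$, as dictated by Lemma~\ref{lemma2}. Reading from the definition of $\bg$ that $\abs{\rmg_\ell} = \abs{\bh_s^\her\bh_\ell}/\abs{\phi_\ell}$ for $\ell\neq s$ and that $\phi_s\rmg_s = \norm{\bh_s}^2 + \sigma^2/P$, a short algebraic manipulation yields $\abs{b_\ell} = \sqrt{P}\,\abs{\rmg_s}/\abs{\rmg_\ell}$. The hypothesis $\abs{\rmg_s}\leq\abs{\rmg_\ell}$ then gives $\abs{b_\ell}\leq\sqrt{P}$ for every $\ell$, with equality precisely on $\setS$; this establishes feasibility and identifies $\setS$ as the set of maximum-power devices.

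For the optimality step, I would decompose any candidate receiver as $\bmm = c\bh_s^* + \bm_\perp$ with $\bm_\perp^\trp\bh_s = 0$, yielding $\bmm^\trp\bh_s = c\norm{\bh_s}^2$ and $\norm{\bmm}^2 = \abs{c}^2\norm{\bh_s}^2 + \norm{\bm_\perp}^2$. For fixed $c$, the device-$s$ error term $\abs{c\norm{\bh_s}^2 b_s - \phi_s}^2$ with $\abs{b_s}^2\leq P$ is bounded below by $\bigl(\abs{\phi_s}-\abs{c}\norm{\bh_s}^2\sqrt{P}\bigr)_+^2$. Dropping the remaining non-negative error contributions for $\ell\neq s$ and the non-negative noise term $\sigma^2\norm{\bm_\perp}^2$ produces a lower bound on $\epsilon(\bmm,\mB)$ that depends only on $\abs{c}$; a one-dimensional minimisation places the optimum at $\abs{c} = \sqrt{P}\abs{\phi_s}/(P\norm{\bh_s}^2+\sigma^2)$ with value $\sigma^2\abs{\phi_s}^2/(P\norm{\bh_s}^2+\sigma^2)$. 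Aligning the phase of $c$ with $\phi_s$ recovers $\bmm^\star$, and a direct substitution confirms that the candidate setting achieves this lower bound and is therefore optimal.

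The main obstacle will be confirming tightness of the lower bound, namely that the non-negative terms dropped above can all be driven to zero simultaneously at the candidate minimiser without violating the power constraints. This is exactly where the hypothesis $\abs{\rmg_\ell}\geq\abs{\rmg_s}$ does the real work: it is precisely the condition under which the ZF-like assignment of $b_\ell$ for $\ell\neq s$ satisfies $\abs{b_\ell}\leq\sqrt{P}$, so that the corresponding error term vanishes at the candidate. A secondary subtlety arises for $\abs{\setS}>1$, since the generic formula~\eqref{eq:mmse_final} for the MMSE receiver given $\setS$ nominally involves $\mH_\setS$ rather than $\bh_s$ alone; one should check that the equality constraints $\abs{\rmg_\ell}=\abs{\rmg_s}$ pin down the extra contributions so that the closed form of the theorem still applies, which completes the argument.
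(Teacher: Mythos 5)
Your proposal is correct and takes exactly the route the paper intends: the paper's own ``proof'' of Theorem~\ref{mmse_1} is a one-line pointer to the two-step argument of Theorem~\ref{th1} (feasibility of the ZF-like assignment under $\abs{\rmg_s}\leq\abs{\rmg_\ell}$, then optimality via the decomposition $\bmm=c\,\bh_s^*+\bm_\perp$ and a one-dimensional minimisation over $\abs{c}$), which is precisely what you carry out, with the correct optimal value $\sigma^2\abs{\phi_s}^2/(P\norm{\bh_s}^2+\sigma^2)$. Your closing remark that the direct lower bound sidesteps the generic formula \eqref{eq:mmse_final} when $\abs{\setS}>1$ is a detail the paper does not even spell out, and it is resolved exactly as you suggest.
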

\begin{proof}
	The proof follows similar steps as those taken for the proof of Theorem~\ref{th1}. We hence skip the details.
\end{proof}
%

Theorem~\ref{th1} describes the sufficient condition for tractable calculation of the MMSE scheme. For networks which do not satisfy this condition, one can show that similar findings, as those given in Sections~\ref{sec:ZF-noClose} and \ref{sec:treeZF} for ZF coordination, are reported as well in this case. In particular, we can show that feasible MMSE settings reduce in aggregation error as their corresponding subset of devices with maximum transmit power shrinks.

\begin{lemma}\label{lemma_two_set_MMSE}
	Let $\brc{\bmm_1, \set{b_{1\ell}}}$ and $\brc{\bmm_2, \set{b_{2\ell}}}$ describe~two~feasible MMSE settings for coordination whose corresponding subsets are $\setS_1$ and $\setS_2$, respectively. If $\setS_1 \subseteq \setS_2$; then, 
	\begin{align}
		\epsilon \brc{\bmm_1, \set{b_{1\ell}}} \leq \epsilon \brc{\bmm_2, \set{b_{2\ell}}}.
	\end{align}
\end{lemma}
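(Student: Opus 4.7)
My plan is to mirror the proof of Lemma~\ref{lemma_two_set}, replacing the pure minimum-norm problem of ZF coordination by the regularized least-squares problem that underlies Lemma~\ref{opt_com}. The first step, and the main one, is to identify the aggregation error of a feasible MMSE setting with active subset $\setS$ as
\begin{align*}
\epsilon(\setS) = \min_{\bmm \in \setC^N} \sum_{\ell \in \setS} \abs{\bmm^\trp \bh_\ell \sqrt{P} - \phi_\ell}^2 + \sigma^2 \norm{\bmm}^2,
\end{align*}
where, following the convention used after \eqref{eq:upp}, the optimal phases of $b_\ell$ for $\ell\in\setS$ have been absorbed into $\phi_\ell$ and $\bh_\ell$. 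Three observations justify this identification: first, every device $\ell\notin\setS$ contributes zero to the first term of \eqref{eq:Aggr_Err1} by virtue of the ZF-type equation $\bmm^\trp \bh_\ell b_\ell = \phi_\ell$ enforced by Definition~\ref{def:2}; second, each $\ell\in\setS$ has $\abs{b_\ell}=\sqrt{P}$ and its phase can and should be aligned with $\bmm^\trp \bh_\ell$ to reproduce the summand shown; and third, the receiver prescribed in Definition~\ref{def:2} coincides with the Wiener/regularized-ZF minimizer of the displayed convex quadratic, exactly as derived at \eqref{eq:mmse_final} in the proof of Lemma~\ref{opt_com}.

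Once this representation is in hand, the comparison step is immediate. For any $\bmm\in\setC^N$ and any $\setS_1\subseteq\setS_2$ one has
\begin{align*}
\sum_{\ell\in\setS_1} \abs{\bmm^\trp \bh_\ell \sqrt{P} - \phi_\ell}^2 + \sigma^2\norm{\bmm}^2 \leq \sum_{\ell\in\setS_2} \abs{\bmm^\trp \bh_\ell \sqrt{P} - \phi_\ell}^2 + \sigma^2\norm{\bmm}^2,
\end{align*}
since the extra summands indexed by $\setS_2\setminus\setS_1$ are nonnegative. Taking the minimum over $\bmm$ of both sides preserves the inequality and yields $\epsilon(\setS_1)\leq\epsilon(\setS_2)$, which is exactly the claim of the lemma.

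The hard part will be the first step, namely checking that the aggregation error attached to a feasible MMSE setting really equals the displayed unconstrained optimum. This verification is essentially a reassembly of Lemmas~\ref{lemma1}--\ref{opt_com} together with Definition~\ref{def:2}, and its only subtle ingredient is handling the phases of $b_\ell$ for $\ell\in\setS$: these are not pinned down by Definition~\ref{def:2} and must be taken as their aggregation-error-minimizing values, as in the phase-alignment step after \eqref{eq:upp}. Once that alignment is performed, the problem reduces to a convex quadratic in $\bmm$, the formula in Definition~\ref{def:2} is identified as its unique unconstrained minimizer, and the monotonicity comparison above becomes a one-line calculation.
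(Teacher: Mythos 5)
Your proof is correct and rests on the same relaxation/monotonicity idea the paper invokes (the paper simply defers to the proof of Lemma~\ref{lemma_two_set}). The one adaptation you make---and it is the right one, since the MMSE subproblem is unconstrained in $\bmm$---is to replace the feasible-region-containment phrasing of the ZF case by pointwise domination of the regularized objectives $f_{\setS}(\bmm)=\sum_{\ell\in\setS}\abs{\sqrt{P}\,\bmm^\trp\bh_\ell-\phi_\ell}^2+\sigma^2\norm{\bmm}^2$ over the common domain $\setC^N$, after identifying the error of a feasible MMSE setting with the unconstrained minimum of $f_{\setS}$ exactly as in \eqref{eq:mmse_final}.
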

\begin{proof}
	The proof is similar to the proof of Lemma \ref{lemma_two_set}.
\end{proof}

\subsection{Approximate MMSE Coordination}
Lemma~\ref{lemma_two_set_MMSE} indicates that the MMSE scheme can be found by searching for the MMSE feasible setting with smallest subset. We hence follow the same approach as in Section~\ref{sec:treeZF} and build the tree of feasible MMSE schemes. Due to its similarity, we skip the definition of the tree and refer the reader to Section~\ref{sec:treeZF}. We now approximately find the MMSE scheme by moving step-wise from the root towards the leaves while in each step we choose the child whose aggregation error is minimum among all the children connected to the parent.

The proposed tree-based search algorithm for MMSE coordination can be summarized as follows:
\begin{enumerate}
	\item We initiate the search at the root of the tree by setting $\setS^{(0)} = \dbc{L}$. The initial receiver $\bmm^{(0)}$ and scaling coefficients are further found from Definition~\ref{def:2}. 
	\item At step $i$, we consider all subsets of $\setS^{(i)}$ that~differ~with $\setS^{(i)}$ in only one element. For each subset, we determine the receiver from $\bmm^{(i)}$ and the channel vectors via rank-one update. We then calculate the scaling coefficients and check if they describe a feasible MMSE setting. 
	\item Among the feasible MMSE subsets of $\setS^{(i)}$, we set $\setS^{(i+1)}$ to be the one whose aggregation error is minimum. 
\end{enumerate}
The algorithm stops, when no feasible MMSE subsets is found. We refer to this algorithm in the sequel as approximate MMSE (AMMSE) scheme.

%

\begin{figure}
	\begin{center}
%
%
\definecolor{mycolor1}{rgb}{0.85000,0.32500,0.09800}%
\definecolor{mycolor2}{rgb}{0.92900,0.69400,0.12500}%
\definecolor{mycolor3}{rgb}{0.49400,0.18400,0.55600}%
\definecolor{mycolor4}{rgb}{0.46600,0.67400,0.18800}%
\begin{tikzpicture}

\begin{axis}[%
width=2.81in,
height=2.1in,
at={(1.351in,0.869in)},
scale only axis,
xmin=-12,
xmax=22,
xlabel style={font=\color{white!15!black}},
xlabel={$\log \mathrm{SNR}$ in [dB]},
ymin=-27,
ymax=12,
ylabel style={font=\color{white!15!black}},
ylabel={$\log \epsilon \brc{\bmm,\mB}$ in [dB]},
axis background/.style={fill=white},
legend style={legend cell align=left, align=left, draw=white!15!black}
]
\addplot [color=mycolor1, line width=2.0pt, mark size=3.5pt, mark=asterisk, mark options={solid, mycolor1}]
  table[row sep=crcr]{%
20	-23.9931241278023\\
18	-21.9879613988322\\
16	-20.0438681669918\\
14	-18.0499795001649\\
12	-16.0216275639901\\
10	-14.003454706998\\
8	-11.9882544135485\\
6	-10.0092694600872\\
4	-7.97627754734932\\
2	-5.99564292072611\\
-0	-3.95476775052131\\
-2	-1.96113133906773\\
-4	0.0171855368771017\\
-6	2.00678487821729\\
-8	4.03199905168215\\
-10	6.05863163690335\\
};
\addlegendentry{Optimal ZF}

\addplot [color=mycolor2, line width=2.0pt, mark size=3.5pt, mark=triangle, mark options={solid, mycolor2}]
  table[row sep=crcr]{%
20	-24.0706786176228\\
18	-22.1149633206137\\
16	-20.244274094597\\
14	-18.3660999886533\\
12	-16.4837471936465\\
10	-14.715067886015\\
8	-13.0783037809331\\
6	-11.5997987936475\\
4	-10.3047756155089\\
2	-9.26897590643573\\
-0	-8.43202092202597\\
-2	-7.79154967965042\\
-4	-7.28216056743879\\
-6	-6.89780362262103\\
-8	-6.61447546545352\\
-10	-6.4128777555208\\
};
\addlegendentry{Optimal MMSE}

\addplot [color=mycolor3, line width=2.0pt, mark size=3.5pt, mark=o, mark options={solid, mycolor3}]
  table[row sep=crcr]{%
20	-23.9312271439696\\
18	-21.9416276915835\\
16	-19.9927112046312\\
14	-17.9951070509897\\
12	-15.9701724946189\\
10	-13.9499305874427\\
8	-11.936653578406\\
6	-9.96160707854001\\
4	-7.92677478523322\\
2	-5.93914428573982\\
-0	-3.89053132042861\\
-2	-1.90449279555624\\
-4	0.0708757866030982\\
-6	2.06433583033648\\
-8	4.08050110885737\\
-10	6.10141707425475\\
};
\addlegendentry{AZF}

\addplot [color=mycolor4, line width=2.0pt, mark size=3.5pt, mark=square, mark options={solid, mycolor4}]
  table[row sep=crcr]{%
20	-24.0175728800555\\
18	-22.0712087122397\\
16	-20.1980014867848\\
14	-18.3321602531394\\
12	-16.4610331300024\\
10	-14.6996451395405\\
8	-13.0725779074801\\
6	-11.5969289768732\\
4	-10.3042791877624\\
2	-9.26897590643573\\
-0	-8.43202092202597\\
-2	-7.79154967965042\\
-4	-7.28216056743879\\
-6	-6.89780362262103\\
-8	-6.61447546545352\\
-10	-6.4128777555208\\
};
\addlegendentry{AMMSE}

\end{axis}
\end{tikzpicture}%
	\end{center}
	\caption{Aggregation error against SNR for the ZF and MMSE schemes with $N=8$ PS antennas and $L=4$ devices. The tree-based reduced searches lie on the optimal schemes.}
	\label{figgg1}
\end{figure}
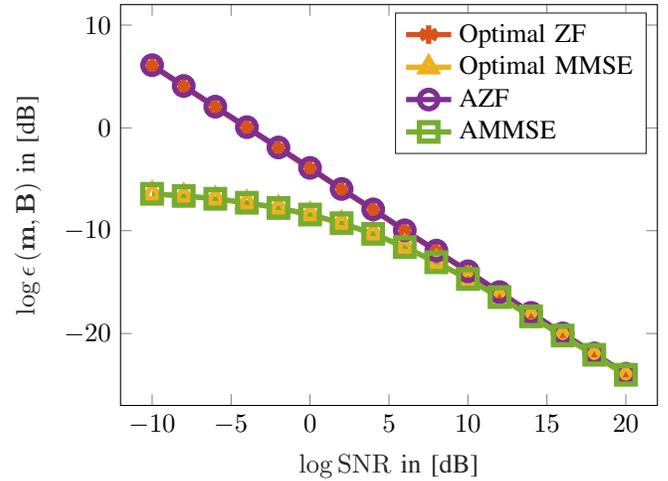

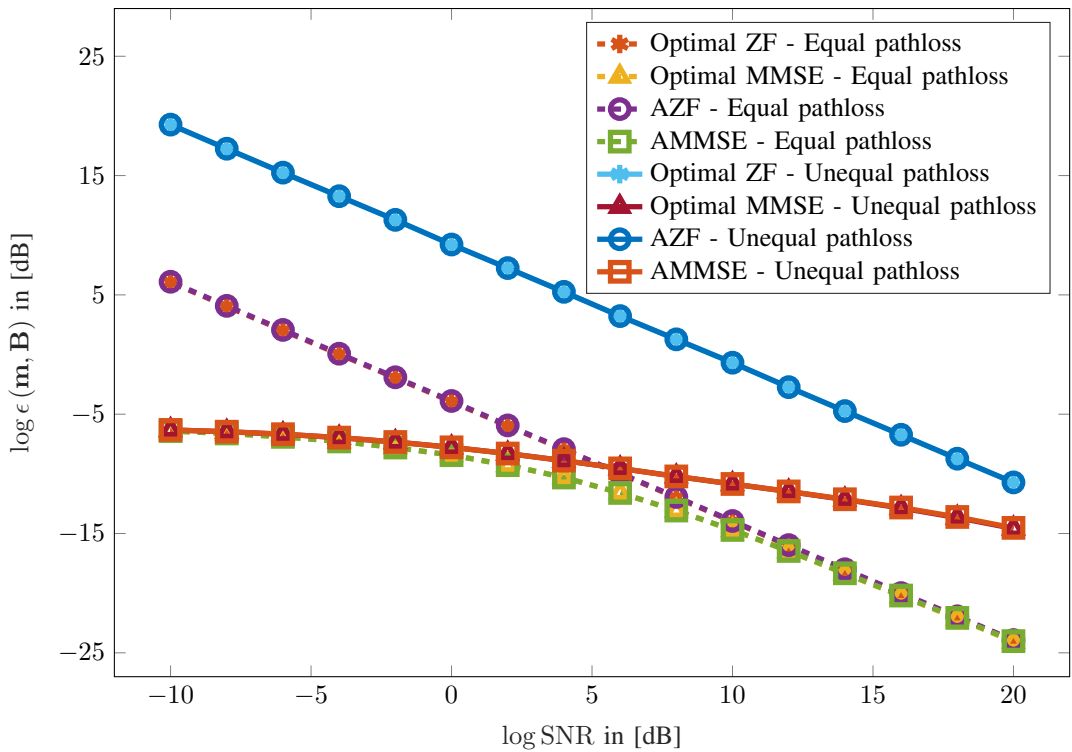
\begin{figure*}
	\begin{center}
%
%
\definecolor{mycolor1}{rgb}{0.85000,0.32500,0.09800}%
\definecolor{mycolor2}{rgb}{0.92900,0.69400,0.12500}%
\definecolor{mycolor3}{rgb}{0.49400,0.18400,0.55600}%
\definecolor{mycolor4}{rgb}{0.46600,0.67400,0.18800}%
\definecolor{mycolor5}{rgb}{0.30100,0.74500,0.93300}%
\definecolor{mycolor6}{rgb}{0.63500,0.07800,0.18400}%
\definecolor{mycolor7}{rgb}{0.00000,0.44700,0.74100}%
\begin{tikzpicture}

\begin{axis}[%
width=5in,
height=3.5in,
at={(2.167in,1.042in)},
scale only axis,
xmin=-12,
xmax=22,
xlabel style={font=\color{white!15!black}},
xlabel={$\log \mathrm{SNR}$ in [dB]},
ymin=-27,
ymax=29,
xtick={-10,-5,0,5,10,15,20},
ytick={-25,-15,-5,5,15,25},
ylabel style={font=\color{white!15!black}},
ylabel={$\log \epsilon \brc{\bmm,\mB}$ in [dB]},
axis background/.style={fill=white},
legend style={legend cell align=left, align=left, draw=white!15!black}
]
\addplot [color=mycolor1, dashed, line width=2.0pt, mark size=3.5pt, mark=asterisk, mark options={solid, mycolor1}]
  table[row sep=crcr]{%
20	-23.9931241278023\\
18	-21.9879613988322\\
16	-20.0438681669918\\
14	-18.0499795001649\\
12	-16.0216275639901\\
10	-14.003454706998\\
8	-11.9882544135485\\
6	-10.0092694600872\\
4	-7.97627754734932\\
2	-5.99564292072611\\
-0	-3.95476775052131\\
-2	-1.96113133906773\\
-4	0.0171855368771017\\
-6	2.00678487821729\\
-8	4.03199905168215\\
-10	6.05863163690335\\
};
\addlegendentry{Optimal ZF - Equal pathloss}

\addplot [color=mycolor2, dashed, line width=2.0pt, mark size=3.5pt, mark=triangle, mark options={solid, mycolor2}]
  table[row sep=crcr]{%
20	-24.0706786176228\\
18	-22.1149633206137\\
16	-20.244274094597\\
14	-18.3660999886533\\
12	-16.4837471936465\\
10	-14.715067886015\\
8	-13.0783037809331\\
6	-11.5997987936475\\
4	-10.3047756155089\\
2	-9.26897590643573\\
-0	-8.43202092202597\\
-2	-7.79154967965042\\
-4	-7.28216056743879\\
-6	-6.89780362262103\\
-8	-6.61447546545352\\
-10	-6.4128777555208\\
};
\addlegendentry{Optimal MMSE - Equal pathloss}

\addplot [color=mycolor3, dashed, line width=2.0pt, mark size=3.5pt, mark=o, mark options={solid, mycolor3}]
  table[row sep=crcr]{%
20	-23.9312271439696\\
18	-21.9416276915835\\
16	-19.9927112046312\\
14	-17.9951070509897\\
12	-15.9701724946189\\
10	-13.9499305874427\\
8	-11.936653578406\\
6	-9.96160707854001\\
4	-7.92677478523322\\
2	-5.93914428573982\\
-0	-3.89053132042861\\
-2	-1.90449279555624\\
-4	0.0708757866030982\\
-6	2.06433583033648\\
-8	4.08050110885737\\
-10	6.10141707425475\\
};
\addlegendentry{AZF - Equal pathloss}

\addplot [color=mycolor4, dashed, line width=2.0pt, mark size=3.5pt, mark=square, mark options={solid, mycolor4}]
  table[row sep=crcr]{%
20	-24.0175728800555\\
18	-22.0712087122397\\
16	-20.1980014867848\\
14	-18.3321602531394\\
12	-16.4610331300024\\
10	-14.6996451395405\\
8	-13.0725779074801\\
6	-11.5969289768732\\
4	-10.3042791877624\\
2	-9.26897590643573\\
-0	-8.43202092202597\\
-2	-7.79154967965042\\
-4	-7.28216056743879\\
-6	-6.89780362262103\\
-8	-6.61447546545352\\
-10	-6.4128777555208\\
};
\addlegendentry{AMMSE - Equal pathloss}

\addplot [color=mycolor5, line width=2.0pt, mark size=3.5pt, mark=asterisk, mark options={solid, mycolor5}]
  table[row sep=crcr]{%
20	-10.7315324423607\\
18	-8.74433164257054\\
16	-6.73802709713781\\
14	-4.75503707507622\\
12	-2.74867446833088\\
10	-0.695258590668927\\
8	1.25741467293774\\
6	3.20997971649789\\
4	5.2404625187235\\
2	7.23387164530381\\
-0	9.20630977590565\\
-2	11.2727314500301\\
-4	13.2538030114207\\
-6	15.2323310253881\\
-8	17.2366446419024\\
-10	19.2575604937508\\
};
\addlegendentry{Optimal ZF - Unequal pathloss}

\addplot [color=mycolor6, line width=2.0pt, mark size=3.5pt, mark=triangle, mark options={solid, mycolor6}]
  table[row sep=crcr]{%
20	-14.5806667357947\\
18	-13.6730619488315\\
16	-12.865439525942\\
14	-12.1537254931733\\
12	-11.4811974058551\\
10	-10.8508201460695\\
8	-10.2143815330727\\
6	-9.54038391853358\\
4	-8.89315578217555\\
2	-8.29640098133033\\
-0	-7.77703766592712\\
-2	-7.32962312517209\\
-4	-6.96074025259916\\
-6	-6.67618649388323\\
-8	-6.46696139045628\\
-10	-6.31744613502728\\
};
\addlegendentry{Optimal MMSE -  Unequal pathloss}

\addplot [color=mycolor7, line width=2.0pt, mark size=3.5pt, mark=o, mark options={solid, mycolor7}]
  table[row sep=crcr]{%
20	-10.7158595909089\\
18	-8.72869917263308\\
16	-6.72277234550613\\
14	-4.73780738279442\\
12	-2.73210883726194\\
10	-0.681805689751401\\
8	1.27356563740013\\
6	3.22311483448676\\
4	5.25587326578506\\
2	7.24770645634299\\
-0	9.22148900713039\\
-2	11.2882462622806\\
-4	13.27170829426\\
-6	15.2468450172697\\
-8	17.2510079969727\\
-10	19.2716695009444\\
};
\addlegendentry{AZF -  Unequal pathloss}

\addplot [color=mycolor1, line width=2.0pt, mark size=3.5pt, mark=square, mark options={solid, mycolor1}]
  table[row sep=crcr]{%
20	-14.5266696587791\\
18	-13.6124789136602\\
16	-12.820232814909\\
14	-12.1290479934142\\
12	-11.4718483920012\\
10	-10.8471544887781\\
8	-10.2142845159864\\
6	-9.54038391853358\\
4	-8.89315578217555\\
2	-8.29640098133033\\
-0	-7.77703766592712\\
-2	-7.32962312517209\\
-4	-6.96074025259916\\
-6	-6.67618649388323\\
-8	-6.46696139045628\\
-10	-6.31744613502728\\
};
\addlegendentry{AMMSE -  Unequal pathloss}

\end{axis}
\end{tikzpicture}%
	\end{center}
	\caption{Aggregation error against SNR for the ZF and MMSE schemes with $N=8$ PS antennas and $L=4$ devices. ZF shows less robustness against asymmetric device distribution.}
	\label{fig2}
\end{figure*}

\section{Numerical results}\label{sec5}
In the section, we examine the efficiency of the proposed algorithms through numerical experiments. To this end, we consider a simple network whose uplink channels experience a standard i.i.d. Rayleigh fading process with fixed path-loss. The channel coefficients are hence written as $h_{\ell,n}=t_\ell f_{\ell,n}$, where $f_{\ell,n}\sim\mathcal{CN}(0,1/N)$ models  small-scale fading, and $t_\ell$ captures the path-loss between the PS and device $\ell$. We further assume an FL setting in which $\phi_\ell=1/L$ for $\ell\in\dbc{L}$. 

In this network, we coordinate the devices using the proposed scheme, i.e., AZF and AMMSE. For sake of comparison, we further evaluate the optimal ZF and MMSE schemes by performing a complete search over the leaves of their corresponding trees. To keep the simulations for the optimum case tractable, we limit the number of devices in the network. 

\subsection{Coordination Performance}
Fig. \ref{figgg1} shows the aggregation error against the signal-to-noise ratio (SNR) in the network, defined as %
	$\mathrm{SNR} = {P}/{\sigma^2}$,
for $L=4$ and $N=8$. For this figure, the path-losses are set to $t_\ell=1$ for $\ell\in\dbc{L}$. As the figure shows, for a small number of devices, the approximate schemes, i.e., AZF and AMMSE, perform optimally.

We next investigate the impact of network asymmetry on both the coordination approaches in Fig.~ \ref{fig2}, where we plot the same figure for two different scenarios: the first scenario is as the one considered in Fig.~\ref{figgg1}, and the other considers the setting in which the path-loss of one of the devices is $10$ dB below the others. As shown in the figure, ZF coordination is more sensitive to the non-uniformity of the path-losses as compared to the MMSE scheme. This is in particular severe at low SNRs which is often the case in current wireless networks.  

\begin{figure}
	\begin{center}
%
%
\definecolor{mycolor1}{rgb}{0.00000,0.44700,0.74100}%
\definecolor{mycolor2}{rgb}{0.85000,0.32500,0.09800}%
\definecolor{mycolor3}{rgb}{0.92900,0.69400,0.12500}%
\definecolor{mycolor4}{rgb}{0.49400,0.18400,0.55600}%
\begin{tikzpicture}

\begin{axis}[%
	width=2.81in,
	height=2.1in,
	at={(1.351in,0.869in)},
scale only axis,
xmin=.5,
xmax=8.5,
xlabel style={font=\color{white!15!black}},
xlabel={load $\xi = N/L$},
ymin=-17,
ymax=-10.5,
xtick={1,2,3,4,5,6,7,8},
ytick={-16,-14,-12,-10},
ylabel style={font=\color{white!15!black}},
ylabel={$\log \epsilon \brc{\bmm,\mB}$ in [dB]},
axis background/.style={fill=white},
legend style={legend cell align=left, align=left, draw=white!15!black}
]
\addplot [color=mycolor1, line width=2.0pt, mark size=3.5pt, mark=asterisk, mark options={solid, mycolor1}]
  table[row sep=crcr]{%
1	-12.2954107335894\\
1.5	-13.4423641988013\\
2	-13.9785575967036\\
2.5	-14.3068737136724\\
3	-14.4839244954256\\
3.5	-14.6211802291753\\
4	-14.7706748480536\\
4.5	-14.858507019649\\
5	-14.9773883885027\\
5.5	-15.0465694963053\\
6	-15.0983280386775\\
6.5	-15.1610464508169\\
7	-15.2284005669544\\
7.5	-15.2600586023595\\
8	-15.2966392299706\\
};
\addlegendentry{Optimal ZF}

\addplot [color=mycolor2, line width=2.0pt, mark size=3.5pt, mark=triangle, mark options={solid, mycolor2}]
  table[row sep=crcr]{%
1	-13.7223187290314\\
1.5	-14.3527919589203\\
2	-14.720758177828\\
2.5	-14.9414600799634\\
3	-15.1187849348591\\
3.5	-15.247109119161\\
4	-15.3759664966696\\
4.5	-15.4516308516677\\
5	-15.5511030401408\\
5.5	-15.6179566321007\\
6	-15.6676356557069\\
6.5	-15.7229165089763\\
7	-15.7846883504168\\
7.5	-15.8106978765082\\
8	-15.8393580263874\\
};
\addlegendentry{Optimal MMSE}

\addplot [color=mycolor3, line width=2.0pt, mark size=3.5pt, mark=o, mark options={solid, mycolor3}]
  table[row sep=crcr]{%
1	-11.8369004433231\\
1.5	-13.3111044415476\\
2	-13.9281575893255\\
2.5	-14.2841000752185\\
3	-14.4754312744799\\
3.5	-14.6171960809579\\
4	-14.7654962303308\\
4.5	-14.858507019649\\
5	-14.9757766087866\\
5.5	-15.0461660216613\\
6	-15.0981646548287\\
6.5	-15.1610464508169\\
7	-15.2278294125721\\
7.5	-15.2600586023595\\
8	-15.2966392299706\\
};
\addlegendentry{AZF}

\addplot [color=mycolor4, line width=2.0pt, mark size=3.5pt, mark=square, mark options={solid, mycolor4}]
  table[row sep=crcr]{%
1	-13.5126467026067\\
1.5	-14.295658568233\\
2	-14.7015242773355\\
2.5	-14.9319361457643\\
3	-15.1162787549576\\
3.5	-15.2462740954258\\
4	-15.3759664966696\\
4.5	-15.4516308516677\\
5	-15.5511030401408\\
5.5	-15.6179566321007\\
6	-15.6676356557069\\
6.5	-15.7229165089763\\
7	-15.7846883504168\\
7.5	-15.8106978765082\\
8	-15.8393580263874\\
};
\addlegendentry{AMMSE}

\end{axis}
\end{tikzpicture}%
	\end{center}
	\caption{Aggregation error against load for the ZF and MMSE schemes with $\log \mathrm{SNR} = 10$ dB and $L=4$ devices.}
	\label{fig3}
\end{figure}
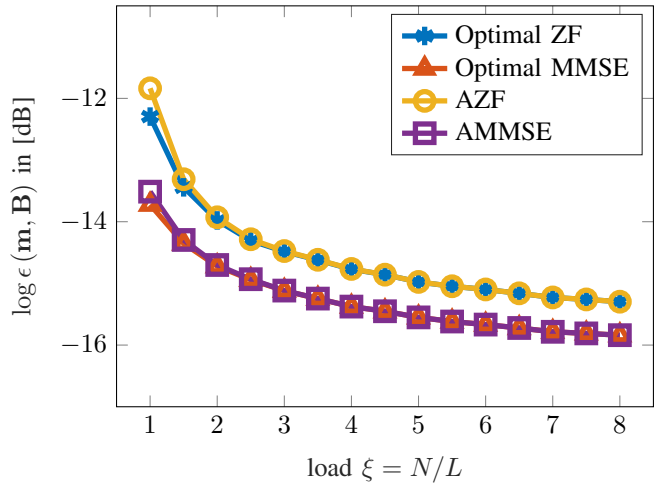

As the next experiment, we plot the aggregation error against the load of the system that is defined as $\xi = N/L$ in Fig. \ref{fig3}. For numerical tractability of the optimal schemes, we set $L=4$. The SNR is further set to $\log\mathrm{SNR} = 12$ dB, and the path-loss is set to $t_\ell = 1$ for $\ell\in\dbc{L}$. As observed, the tree-base approximates closely track the optimal schemes. Furthermore, the gap between the optimal and approximated schemes shrinks as the load grows. 

\subsection{Complexity of the Proposed Schemes}
As a metric of complexity, we define the \textit{check time} to be the number of subsets being checked in the tree-based search of the algorithm. Fig. \ref{fig4} and Fig. \ref{fig5} show the average check time for AZF or AMMSE algorithms against the SNR and load, respectively. Here, $L=4$ and $N=8$, and the check time is averaged over multiple channel realizations. The devices are assumed to be uniformly distributed in the network, i.e., the path-loss is assumed to the same for all the devices. As the figures depict, the complexity of the AZF scheme only varies against the load, while the complexity of the AMMSE scheme scales with both the load and channel quality. The figures further demonstrate that the check time for AMMSE is always less than AZF. This is a favorable behavior, since AMMSE leads to less aggregation error, as well. Fig.~\ref{fig5} further implies that the average check time of both the algorithms reduces by growth of $N$ and converges to $L+1$. This follows the fact that at high system loads, the channel vectors become statistically orthogonal, and hence all the devices transmit with the maximum power. As the result, both the algorithms stop at the root, after checking the feasibility of their $L$ children.

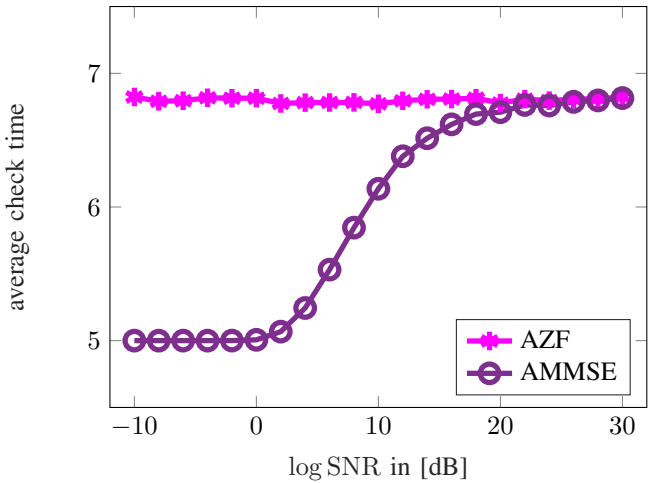
\begin{figure}
	\begin{center}
%
%
\definecolor{mycolor1}{rgb}{1.00000,0.00000,1.00000}%
\definecolor{mycolor2}{rgb}{0.49400,0.18400,0.55600}%
\begin{tikzpicture}

\begin{axis}[%
	width=2.81in,
height=2.1in,
at={(1.351in,0.869in)},
scale only axis,
xmin=-12,
xmax=32,
xlabel style={font=\color{white!15!black}},
xlabel={$\log \mathrm{SNR}$ in [dB]},
ymin=4.5,
ymax=7.5,
xtick={-10,0,10,20,30},
ytick={5,6,7},
ylabel style={font=\color{white!15!black}},
ylabel={average check time},
axis background/.style={fill=white},
title style={font=\bfseries},
legend style={at={(0.97,0.03)}, anchor=south east, legend cell align=left, align=left, draw=white!15!black}
]
\addplot [color=mycolor1, line width=2.0pt, mark size=3.5pt, mark=asterisk, mark options={solid, mycolor1}]
  table[row sep=crcr]{%
30	6.8232\\
28	6.81\\
26	6.8073\\
24	6.7961\\
22	6.807\\
20	6.7776\\
18	6.8142\\
16	6.8078\\
14	6.8057\\
12	6.7945\\
10	6.7719\\
8	6.7826\\
6	6.782\\
4	6.7809\\
2	6.7747\\
-0	6.814\\
-2	6.8139\\
-4	6.8173\\
-6	6.7953\\
-8	6.791\\
-10	6.8231\\
};
\addlegendentry{AZF}

\addplot [color=mycolor2, line width=2.0pt, mark size=3.5pt, mark=o, mark options={solid, mycolor2}]
  table[row sep=crcr]{%
30	6.815\\
28	6.7974\\
26	6.7867\\
24	6.7591\\
22	6.7635\\
20	6.7094\\
18	6.6952\\
16	6.6182\\
14	6.5145\\
12	6.3795\\
10	6.1364\\
8	5.8455\\
6	5.5313\\
4	5.2442\\
2	5.0672\\
-0	5.006\\
-2	5.0003\\
-4	5\\
-6	5\\
-8	5\\
-10	5\\
};
\addlegendentry{AMMSE}

\end{axis}
\end{tikzpicture}%
	\end{center}
	\caption{Average check time against SNR for the ZF and MMSE schemes with $N=8$ PS antennas and $L=4$ devices.}
	\label{fig4}
\end{figure}

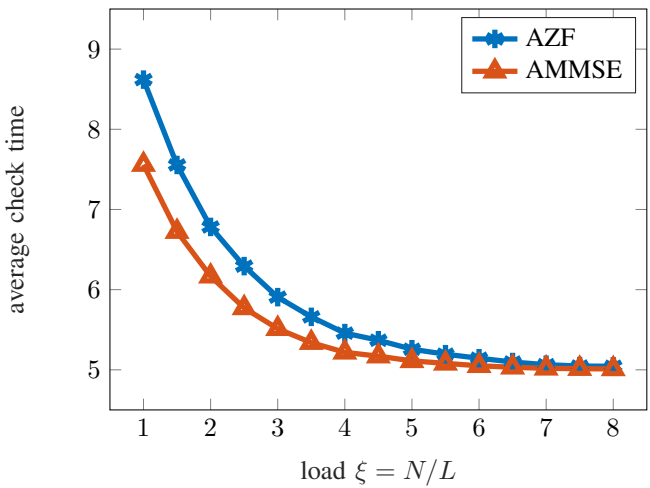
\begin{figure}
	\begin{center}
%
%
\definecolor{mycolor1}{rgb}{0.00000,0.44700,0.74100}%
\definecolor{mycolor2}{rgb}{0.85000,0.32500,0.09800}%
\begin{tikzpicture}

\begin{axis}[%
	width=2.81in,
height=2.1in,
at={(2.167in,1.042in)},
scale only axis,
xmin=.5,
xmax=8.5,
xlabel style={font=\color{white!15!black}},
xlabel={load $\xi = N/L$},
ymin=4.5,
ymax=9.5,
xtick={1,2,3,4,5,6,7,8},
ytick={5,6,7,8,9},
ylabel style={font=\color{white!15!black}},
ylabel={average check time},
axis background/.style={fill=white},
legend style={legend cell align=left, align=left, draw=white!15!black}
]
\addplot [color=mycolor1, line width=2.0pt, mark size=3.5pt, mark=asterisk, mark options={solid, mycolor1}]
  table[row sep=crcr]{%
1	8.6194\\
1.5	7.5552\\
2	6.782\\
2.5	6.2934\\
3	5.9066\\
3.5	5.6607\\
4	5.458\\
4.5	5.3692\\
5	5.2571\\
5.5	5.195\\
6	5.1446\\
6.5	5.0973\\
7	5.0639\\
7.5	5.0471\\
8	5.0446\\
};
\addlegendentry{AZF}

\addplot [color=mycolor2, line width=2.0pt, mark size=3.5pt, mark=triangle, mark options={solid, mycolor2}]
  table[row sep=crcr]{%
1	7.5596\\
1.5	6.7234\\
2	6.1671\\
2.5	5.7719\\
3	5.5123\\
3.5	5.3388\\
4	5.2183\\
4.5	5.1703\\
5	5.1119\\
5.5	5.0804\\
6	5.0507\\
6.5	5.0318\\
7	5.0186\\
7.5	5.0144\\
8	5.0111\\
};
\addlegendentry{AMMSE}

\end{axis}
\end{tikzpicture}%
	\end{center}
	\caption{Average check time against load for the ZF and MMSE schemes with $\log \mathrm{SNR} = 10$ dB and $L=4$ devices.}
	\label{fig5}
\end{figure}

As the last experiment, we investigate the impacts of a non-uniform distribution of the device in the network on tree-based search complexity. Fig. \ref{fig6} shows the average check time for the same setting as in Fig.~\ref{fig5}, when one of the devices is located, such that its path-loss is $10$ dB below the path-losses of the other devices. As the figure shows, this asymmetry increases even further the gap between the complexity of AZF and AMMSE leading to this conclusion that AMMSE proposes a more robust behavior as compared with the AZF.

\begin{figure}
	\begin{center}
%
%
\definecolor{mycolor1}{rgb}{1.00000,0.00000,1.00000}%
\definecolor{mycolor2}{rgb}{0.49400,0.18400,0.55600}%
\begin{tikzpicture}

\begin{axis}[%
	width=2.81in,
height=2.1in,
at={(2.167in,1.042in)},
scale only axis,
xmin=.5,
xmax=8.5,
xlabel style={font=\color{white!15!black}},
xlabel={load $\xi = N/L$},
ymin=4.5,
ymax=11.5,
xtick={1,2,3,4,5,6,7,8},
ytick={5,7,9,11},
ylabel style={font=\color{white!15!black}},
ylabel={average check time},
axis background/.style={fill=white},
legend style={legend cell align=left, align=left, draw=white!15!black}
]
\addplot [color=mycolor1, line width=2.0pt, mark size=3.5pt, mark=asterisk, mark options={solid, mycolor1}]
  table[row sep=crcr]{%
1	10.3864\\
1.5	10.2699\\
2	10.1514\\
2.5	10.0438\\
3	9.9313\\
3.5	9.8439\\
4	9.7325\\
4.5	9.6342\\
5	9.5686\\
5.5	9.5289\\
6	9.4005\\
6.5	9.3476\\
7	9.2786\\
7.5	9.2201\\
8	9.1755\\
};
\addlegendentry{AZF}

\addplot [color=mycolor2, line width=2.0pt, mark size=3.5pt, mark=triangle, mark options={solid, mycolor2}]
  table[row sep=crcr]{%
1	6.6982\\
1.5	6.063\\
2	5.6757\\
2.5	5.4405\\
3	5.294\\
3.5	5.1892\\
4	5.12\\
4.5	5.0747\\
5	5.0555\\
5.5	5.0369\\
6	5.024\\
6.5	5.015\\
7	5.0105\\
7.5	5.0051\\
8	5.0054\\
};
\addlegendentry{AMMSE}

\end{axis}
\end{tikzpicture}%
	\end{center}
	\caption{Average check time against load for the ZF and MMSE schemes with $\log \mathrm{SNR} = 10$ dB and $L=4$ devices in a network with asymmetric device distribution.}
	\label{fig6}
\end{figure}
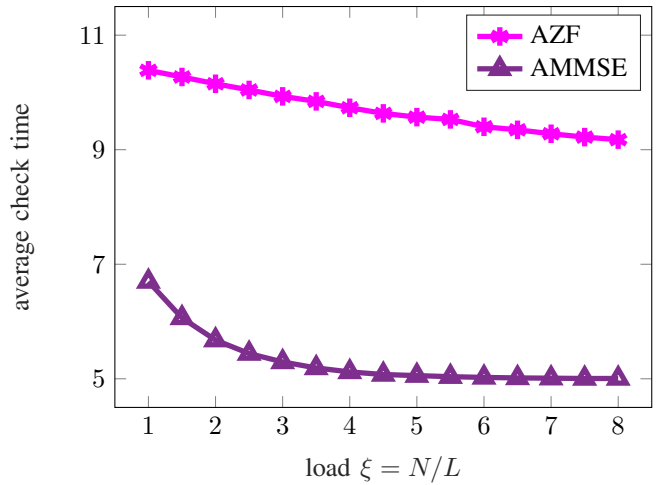

\section{Conclusions}\label{sec6}
Low-complexity algorithms have been proposed for MMSE and ZF coordination in OTA-FL. The algorithms find efficient approximation of the optimal MMSE and ZF schemes using a tree-based search algorithm which solves the equivalent subset selection problem. Our numerical investigations depict that the proposed algorithms track closely the optimal schemes. 

The analytical results of this study indicate that both the MMSE and ZF coordination schemes deal with the same level of hardness, while the former leads to a lower aggregation error. The numerical investigations further reveal that the tree-based approximation of MMSE coordination results in less computational complexity and more robustness as compared to the approximated ZF scheme. Considering the favorable behavior of the MMSE scheme with respect to performance and complexity, one concludes that MMSE is a better approach for device coordination in OTA-FL. 

\bibliographystyle{IEEEtran}
\bibliography{CoordinationOTAFL} 

\end{document}